\pgfplotsset{compat=1.16}
\DeclareMathAlphabet{\mathcal}{OMS}{cmsy}{m}{n}
\newcommand{\E}{\mathbb{E}}
\renewcommand{\P}{\mathbb{P}}
\newcommand{\pe}[2][]{\ensuremath{\E_{#1}[#2]}}
\newcommand{\pel}[2][]{\ensuremath{\E_{#1}\left[#2\right]}}
\newcommand{\mmod}[1]{\bm{#1}}
\newcommand{\vals}{\mmod{v}}
\newcommand{\bids}{\mmod{b}}
\newcommand{\pays}{\mmod{p}}
\newcommand{\fee}{\ensuremath{f}}
\newcommand{\Bids}{\mmod{\sigma}}
\newcommand{\opt}{\mmod{x}^*}
\newcommand{\alloc}{\mmod{x}}
\newcommand{\pne}{\ensuremath{\mathrm{PNE}}}
\newcommand{\mne}{\ensuremath{\mathrm{MNE}}}
\newcommand{\mce}{\ensuremath{\mathrm{CE}}}
\newcommand{\mcce}{\ensuremath{\mathrm{CCE}}}
\newcommand{\SW}{\text{SW}}
\mathchardef\mhyphen="2D
\newcommand{\yDPA}{\ensuremath{\gamma\mhyphen\textsc{HYA}}}
\newcommand{\yFPA}{\ensuremath{\gamma\mhyphen\textsc{FPA}}}
\newcommand{\fpa}{\ensuremath{\textsc{FP-Auction}}}
\newcommand{\spa}{\ensuremath{\textsc{SP-Auction}}}
\newcommand{\ycfpa}{\ensuremath{\gamma\mhyphen\text{corrupt  auction}}}
\newcommand{\mech}{\ensuremath{\mathcal{M}}}
\newcommand{\nob}{\ensuremath{\text{NOB}}}
\newcommand{\yhybrid}{$\gamma$-hybrid auction}
\newcommand{\set}[1]{\ensuremath\{#1\}}
\newcommand{\poa}{\ensuremath{\mathrm{POA}}}
\newcommand{\ce}{\sigma}
\newcommand{\HB}{\textit{HB}}
\newcommand{\SB}{\textit{SB}}
\newcommand{\myparagraph}[1]{\smallskip\emph{#1}~}
\title{%
Corruption in Auctions: \\ Social Welfare Loss in Hybrid Multi-Unit Auctions
}
\titlerunning{Social Welfare Loss in Hybrid Multi-Unit Auctions}
\author{%
Andries~van~Beek\inst{1} 
\and 
Ruben~Brokkelkamp\inst{2}
\and 
Guido~Sch\"afer\inst{2, 3}
}
\institute{
CentER, 
Dept.~of Econometrics \& Operations Research, 
Tilburg University, The Netherlands \\
\email{a.j.vanbeek@tilburguniversity.edu}\\
\and
Networks and Optimization, Centrum Wiskunde \& Informatica (CWI), The Netherlands\\
\email{\{ruben.brokkelkamp,g.schaefer\}@cwi.nl} 
\and
Institute for Logic Language and Computation,
University of Amsterdam, The Netherlands
}
\newcommand{\BibTeX}{\rm B\kern-.05em{\sc i\kern-.025em b}\kern-.08em\TeX}
\begin{document}

\maketitle 

\begin{abstract}
We initiate the study of the social welfare loss caused by corrupt auctioneers, both in single-item and multi-unit auctions.
In our model, the auctioneer may collude with the winning bidders by letting them lower their bids in exchange for a (possibly bidder-dependent) fraction $\gamma$ of the surplus. We consider different corruption schemes. 
In the most basic one, all winning bidders lower their bid to the highest losing bid. We show that this setting is equivalent to a \emph{$\gamma$-hybrid auction} in which the payments are a convex combination of first-price and the second-price payments. More generally, we consider corruption schemes that can be related to \emph{$\gamma$-approximate first-price auctions (\yFPA)}, where the payments recover at least a $\gamma$-fraction of the first-price payments. Our goal is to obtain a precise understanding of the robust price of anarchy (POA) of such auctions. 
If no restrictions are imposed on the bids, we prove a bound on the robust POA of $\yFPA$ which is tight (over the entire range of $\gamma$) for the single-item and the multi-unit auction setting. 
On the other hand, if the bids satisfy the no-overbidding assumption a more fine-grained landscape of the price of anarchy emerges, depending on the auction setting and the equilibrium notion.
Albeit being more challenging, we derive (almost) tight bounds for both auction settings and several equilibrium notions, basically leaving open some (small) gaps for the coarse-correlated price of anarchy only.
\end{abstract}

\section{Introduction}

\myparagraph{Motivation and Background.}
We consider auction settings where a seller wants to sell some items and for this purpose recruits an auctioneer to organize an auction on their behalf.\footnote{Throughout this paper, we use ``they'' as the gender-neutral form for third-person singular pronouns.} 
Such settings are widely prevalent in practice as they emerge naturally whenever the seller lacks the expertise (or facilities, time, etc.) to host the auction themselves. 
For example, individual sellers usually involve dedicated auctioneers or auction houses when they want to sell particular objects (such as real estate, cars, artwork, etc.). 
In private companies, the responsible finance officers are typically in charge of handling the procurement auctions. Similarly, government procurement is usually executed by some entity that acts on behalf of the government. 
The dilemma in such settings is that the incentives of the seller and the auctioneer are rather diverse in general: while the seller is interested in extracting the highest payments for the objects (or getting service at the lowest cost), the agent primarily cares about maximizing their own gains from hosting the auction. Although undesirably, this misalignment leads (unavoidably) to fraudulent schemes which might be used by the auctioneer to manipulate the auction to their own benefit. 

Corruption in auctions, where an auctioneer engages in bid rigging with one (or several) of the bidders, occurs rather frequently in practice, especially in the public sector (e.g., in construction and procurement auctions). For example, in 1999 the procurement auction for the construction of the new Berlin Brandenburg airport had to be rerun after investigations revealed that the initial winner was able to change the bid after they had illegally acquired information about the application of one of their main competitors (see \cite{WSJ99}). As another example, in 1993 the New York City School Construction Authority caused a scandal when investigation revealed that they used a simple (but effective) bid-rigging scheme in a procurement auction setting (see \cite{NYT93}): 
\begin{quote}
\emph{``In what one investigator described as a nervy scheme, that worker would unseal envelopes at a public bid opening, saving for last the bid submitted by the contractor who had paid him off. At that point, knowing the previous bids, the authority worker would misstate the contractor's bid, insuring that it was low enough to secure the contract but as close as possible to the next highest bid so that the contractor would get the largest possible price.''}
\end{quote}

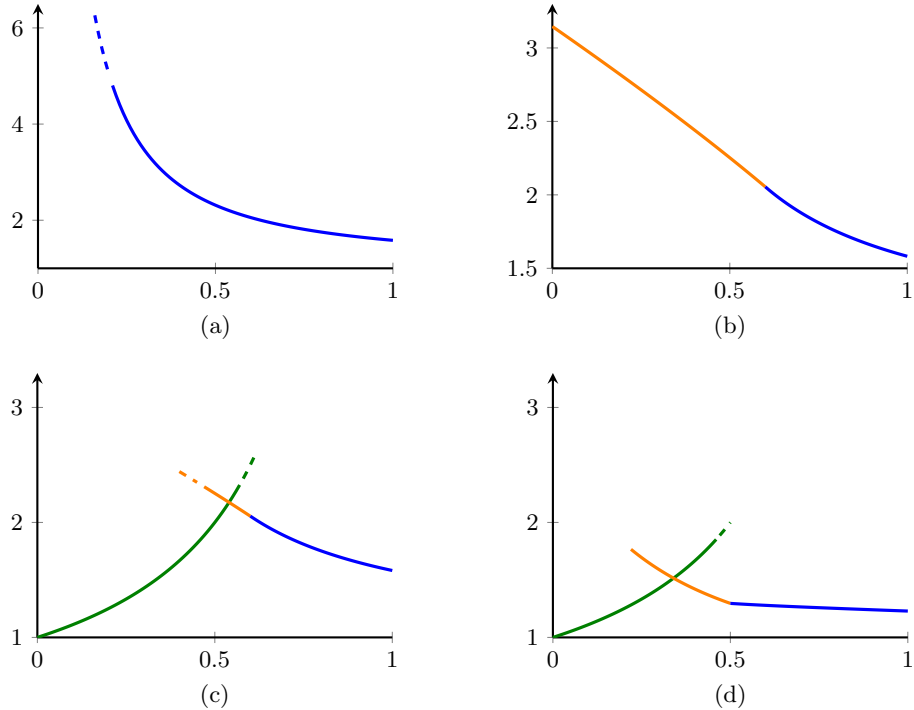
\begin{figure*}[t]%
\centering
\newcommand{\graphwidth}{6.3cm}%
\newcommand{\graphheight}{5.1cm}
\newcommand{\graphinbetween}{1cm}%
\begin{tikzpicture}
\begin{axis}[
        width=\graphwidth, 
        height=\graphheight,
        axis line style={thick},
        xlabel={(a)},
        ymin=1, ymax=6.5,
        xmin=0, xmax=1,
        axis x line={bottom},
        axis y line={left},
        ytick distance=2,
        xtick distance=0.5,
        x axis line style=-
    ]
\addplot[no marks, dashed, very thick, color=blue] coordinates {
(0.16, 6.262088674984048) 
(0.17, 5.898800076064398) (0.18000000000000002, 5.577116241553959) (0.19, 5.290557290824336) (0.2, 5.033918274531521) 
};
\addplot[no marks, very thick, color=blue] coordinates{ 
(0.21000000000000002, 4.802966811597772) 
(0.22, 4.594223823060196) (0.23, 4.404801763404024) (0.24000000000000002, 4.23228336839986) (0.25, 4.074629441455096) (0.26, 3.930107783555831) (0.27, 3.797237746956672) (0.28, 3.674746493503648) (0.29000000000000004, 3.561534134285204) (0.3, 3.4566456886334516) (0.31, 3.3592483369278714) (0.32, 3.2686128247203565) (0.33, 3.184098152765343) 
(0.34, 3.105138890384421) (0.35000000000000003, 3.0312345998254613) (0.36000000000000004, 2.961940971760537) (0.37, 2.896862357160519) (0.38, 2.8356454457837628) (0.39, 2.777973891620244) (0.4, 2.72356372458463) (0.41000000000000003, 2.672159418276287) (0.42000000000000004, 2.6235305077230646) (0.43, 2.5774686701840803) (0.44, 2.533785197416633) (0.45, 2.492308800153379) (0.46, 2.4528836955271824) (0.47000000000000003, 2.415367936313134) (0.48000000000000004, 2.3796319475089605) (0.49, 2.3455572412415315) (0.5, 2.3130352854993315) (0.51, 2.2819665059307543) (0.52, 2.252259403060211) (0.53, 2.2238297698733374) (0.54, 2.1965999969013024) (0.55, 2.170498453766444) (0.56, 2.1454589376973265) (0.5700000000000001, 2.1214201808295092) (0.5800000000000001, 2.098325409218629) (0.59, 2.0761219474374495) (0.6, 2.054760863434977) (0.61, 2.0341966490258425) (0.62, 2.0143869319699843) (0.63, 1.9952922161116033) (0.64, 1.9768756464849293) (0.65, 1.9591027966731223) (0.66, 1.94194147603454) (0.67, 1.9253615546949996) (0.68, 1.9093348044518619) (0.6900000000000001, 1.8938347539510443) (0.7000000000000001, 1.8788365566859022) (0.7100000000000001, 1.8643168705311133) (0.72, 1.8502537476684993) (0.73, 1.836626533887848) (0.74, 1.8234157763566685) (0.75, 1.8106031390503594) (0.76, 1.7981713251203177) (0.77, 1.7861040055534634) (0.78, 1.7743857535438372) (0.79, 1.7630019840564422) (0.8, 1.751938898116266) (0.81, 1.7411834314023216) (0.8200000000000001, 1.7307232067682339) (0.8300000000000001, 1.720546490348065) (0.8400000000000001, 1.7106421509391907) (0.85, 1.7009996223836499) (0.86, 1.69160886869583) (0.87, 1.6824603517080718) (0.88, 1.6735450010269943) (0.89, 1.6648541861124209) (0.9, 1.656379690307903) (0.91, 1.6481136866672679) (0.92, 1.6400487154354744) (0.93, 1.6321776630545954) (0.9400000000000001, 1.6244937425770176) (0.9500000000000001, 1.6169904753781783) (0.9600000000000001, 1.6096616740703684) (0.97, 1.6025014265275033) (0.98, 1.5955040809383239) (0.99, 1.5886642318123736) (1.0, 1.5819767068693265)
};
\end{axis}
\end{tikzpicture}
\hspace*{\graphinbetween}
%
\begin{tikzpicture}
\begin{axis}[
        width=\graphwidth, 
        height=\graphheight,
        axis line style={thick},
        xlabel={(b)},
        ymin=1.5, ymax=3.3,
        axis x line={bottom},
        axis y line={left},
        xtick distance=0.5,
        x axis line style=-
    ]
\addplot[no marks, very thick, color=orange] coordinates {
(0.0, 3.1461932206205825) (0.01, 3.1293747126511713) (0.02, 3.1125239216565728) (0.03, 3.095640380190383) (0.04, 3.078723609338155) (0.05, 3.061773118317541) (0.06, 3.044788404060239) (0.07, 3.027768950774747) (0.08, 3.010714229488804) (0.09, 2.9936236975703716) (0.1, 2.9764967982258947) (0.11, 2.959332959974506) (0.12, 2.9421315960967562) (0.13, 2.9248921040563243) (0.14, 2.90761386489307) (0.15, 2.8902962425856744) (0.16, 2.872938583381963) (0.17, 2.855540215094898) (0.18, 2.8381004463620285) (0.19, 2.820618565866073) (0.2, 2.8030938415140803) (0.21, 2.7855255195724364) (0.22, 2.7679128237547848) (0.23, 2.7502549542596553) (0.24, 2.7325510867543614) (0.25, 2.714800371301444) (0.26, 2.6970019312236198) (0.27, 2.6791548619028416) (0.28, 2.661258229508738) (0.29, 2.6433110696512414) (0.3, 2.625312385951799) (0.31, 2.6072611485270274) (0.32, 2.589156292378144) (0.33, 2.5709967156788784) (0.34, 2.552781277953888) (0.35000000000000003, 2.5345087981389653) (0.36, 2.51617805251346) (0.37, 2.497787772494435) (0.38, 2.479336642281022) (0.39, 2.4608232963362755) (0.4, 2.4422463166925557) (0.41000000000000003, 2.4236042300649854) (0.42, 2.4048955047559195) (0.43, 2.386118547331526) (0.44, 2.3672716990495153) (0.45, 2.3483532320147145) (0.46, 2.3293613450365767) (0.47000000000000003, 2.3102941591596937) (0.48, 2.2911497128350162) (0.49, 2.271925956695621) (0.5, 2.2526207478964415) (0.51, 2.2332318439723697) (0.52, 2.213756896163325) (0.53, 2.1941934421482716) (0.54, 2.1745388981224876) (0.55, 2.1547905501435323) (0.56, 2.1349455446611034) (0.5700000000000001, 2.1150008781340093) (0.58, 2.0949533856235494) (0.59, 2.0747997282362727) (0.6, 2.0545363792699005)
};
\addplot[no marks, very thick, color=blue] coordinates{
(0.6, 2.054760863434977) (0.61, 2.0341966490258425) (0.62, 2.0143869319699843) (0.63, 1.9952922161116033) (0.64, 1.9768756464849293) (0.65, 1.9591027966731223) (0.66, 1.94194147603454) (0.67, 1.9253615546949996) (0.68, 1.9093348044518619) (0.6900000000000001, 1.8938347539510443) (0.7000000000000001, 1.8788365566859022) (0.71, 1.8643168705311137) (0.72, 1.8502537476684993) (0.73, 1.836626533887848) (0.74, 1.8234157763566685) (0.75, 1.8106031390503594) (0.76, 1.7981713251203177) (0.77, 1.7861040055534634) (0.78, 1.7743857535438372) (0.79, 1.7630019840564422) (0.8, 1.751938898116266) (0.81, 1.7411834314023216) (0.8200000000000001, 1.7307232067682339) (0.8300000000000001, 1.720546490348065) (0.84, 1.710642150939191) (0.85, 1.7009996223836499) (0.86, 1.69160886869583) (0.87, 1.6824603517080718) (0.88, 1.6735450010269943) (0.89, 1.6648541861124209) (0.9, 1.656379690307903) (0.91, 1.6481136866672679) (0.92, 1.6400487154354744) (0.93, 1.6321776630545954) (0.9400000000000001, 1.6244937425770176) (0.9500000000000001, 1.6169904753781783) (0.96, 1.6096616740703682) (0.97, 1.6025014265275033) (0.98, 1.5955040809383239) (0.99, 1.5886642318123736) (1.0, 1.5819767068693265)
};
\end{axis}
\end{tikzpicture}

\vspace*{.3cm}

\begin{tikzpicture}
\begin{axis}[
        width=\graphwidth, 
        height=\graphheight,
        axis line style={thick},
        xlabel={(c)},
        ymin=1, ymax=3.3,
        axis x line={bottom},
        axis y line={left},
        xtick distance=0.5,
        x axis line style=-,
    ]
\addplot[no marks, very thick, color=green!50!black] coordinates {
(0.0, 1.0) (0.01, 1.0101010101010102) (0.02, 1.0204081632653061) (0.03, 1.0309278350515465) (0.04, 1.0416666666666667) (0.05, 1.0526315789473684) (0.06, 1.0638297872340425) (0.07, 1.0752688172043012) (0.08, 1.0869565217391304) (0.09, 1.0989010989010988) (0.1, 1.1111111111111112) (0.11, 1.1235955056179776) (0.12, 1.1363636363636365) (0.13, 1.1494252873563218) (0.14, 1.1627906976744187) (0.15, 1.1764705882352942) (0.16, 1.1904761904761905) (0.17, 1.2048192771084338) (0.18, 1.2195121951219512) (0.19, 1.2345679012345678) (0.2, 1.25) (0.21, 1.2658227848101264) (0.22, 1.282051282051282) (0.23, 1.2987012987012987) (0.24, 1.3157894736842106) (0.25, 1.3333333333333333) (0.26, 1.3513513513513513) (0.27, 1.36986301369863) (0.28, 1.3888888888888888) (0.29, 1.4084507042253522) (0.3, 1.4285714285714286) (0.31, 1.4492753623188408) (0.32, 1.4705882352941178) (0.33, 1.492537313432836) (0.34, 1.5151515151515154) (0.35000000000000003, 1.5384615384615388) (0.36, 1.5625) (0.37, 1.5873015873015872) (0.38, 1.6129032258064517) (0.39, 1.639344262295082) (0.4, 1.6666666666666667) (0.41000000000000003, 1.6949152542372883) (0.42, 1.7241379310344827) (0.43, 1.7543859649122806) (0.44, 1.7857142857142856) (0.45, 1.8181818181818181) (0.46, 1.8518518518518516) (0.47000000000000003, 1.8867924528301885) (0.48, 1.923076923076923) (0.49, 1.9607843137254901) (0.5, 2.0) (0.51, 2.0408163265306123) (0.52, 2.0833333333333335) (0.53, 2.127659574468085) (0.54, 2.173913043478261) (0.55, 2.2222222222222223) (0.56, 2.272727272727273) (0.5700000000000001, 2.3255813953488373) 
};
\addplot[no marks, very thick, color=green!50!black, dashed] coordinates {
(0.58, 2.380952380952381) (0.59, 2.4390243902439024) (0.6, 2.5) (0.61, 2.564102564102564)
};
\addplot[no marks, very thick, color=orange] coordinates {
(0.47000000000000003, 2.3102941591596937) (0.48, 2.2911497128350162) (0.49, 2.271925956695621) (0.5, 2.2526207478964415) (0.51, 2.2332318439723697) (0.52, 2.213756896163325) (0.53, 2.1941934421482716) (0.54, 2.1745388981224876) (0.55, 2.1547905501435323) (0.56, 2.1349455446611034) (0.5700000000000001, 2.1150008781340093) (0.58, 2.0949533856235494) (0.59, 2.0747997282362727) (0.6, 2.0545363792699005)
};
\addplot[no marks, dashed, very thick, color=orange] coordinates {
(0.4, 2.4422463166925557) (0.41000000000000003, 2.4236042300649854) (0.42, 2.4048955047559195) (0.43, 2.386118547331526) (0.44, 2.3672716990495153) (0.45, 2.3483532320147145) 
};
\addplot[no marks, very thick, color=blue] coordinates{
(0.6, 2.054760863434977) (0.61, 2.0341966490258425) (0.62, 2.0143869319699843) (0.63, 1.9952922161116033) (0.64, 1.9768756464849293) (0.65, 1.9591027966731223) (0.66, 1.94194147603454) (0.67, 1.9253615546949996) (0.68, 1.9093348044518619) (0.6900000000000001, 1.8938347539510443) (0.7000000000000001, 1.8788365566859022) (0.71, 1.8643168705311137) (0.72, 1.8502537476684993) (0.73, 1.836626533887848) (0.74, 1.8234157763566685) (0.75, 1.8106031390503594) (0.76, 1.7981713251203177) (0.77, 1.7861040055534634) (0.78, 1.7743857535438372) (0.79, 1.7630019840564422) (0.8, 1.751938898116266) (0.81, 1.7411834314023216) (0.8200000000000001, 1.7307232067682339) (0.8300000000000001, 1.720546490348065) (0.84, 1.710642150939191) (0.85, 1.7009996223836499) (0.86, 1.69160886869583) (0.87, 1.6824603517080718) (0.88, 1.6735450010269943) (0.89, 1.6648541861124209) (0.9, 1.656379690307903) (0.91, 1.6481136866672679) (0.92, 1.6400487154354744) (0.93, 1.6321776630545954) (0.9400000000000001, 1.6244937425770176) (0.9500000000000001, 1.6169904753781783) (0.96, 1.6096616740703682) (0.97, 1.6025014265275033) (0.98, 1.5955040809383239) (0.99, 1.5886642318123736) (1.0, 1.5819767068693265)
};
\end{axis}
\end{tikzpicture}
\hspace*{\graphinbetween}
%
\hspace*{4.5pt}
\begin{tikzpicture}
\begin{axis}[
        width=\graphwidth, 
        height=\graphheight,
        axis line style={thick},
        xlabel={(d)},
        ymin=1, ymax=3.3,
        axis x line={bottom},
        axis y line={left},
        xtick distance=0.5,
        x axis line style=-
    ]
\addplot[no marks, very thick, color=green!50!black] coordinates {
(0.0, 1.0) (0.01, 1.0101010101010102) (0.02, 1.0204081632653061) (0.03, 1.0309278350515465) (0.04, 1.0416666666666667) (0.05, 1.0526315789473684) (0.06, 1.0638297872340425) (0.07, 1.0752688172043012) (0.08, 1.0869565217391304) (0.09, 1.0989010989010988) (0.1, 1.1111111111111112) (0.11, 1.1235955056179776) (0.12, 1.1363636363636365) (0.13, 1.1494252873563218) (0.14, 1.1627906976744187) (0.15, 1.1764705882352942) (0.16, 1.1904761904761905) (0.17, 1.2048192771084338) (0.18, 1.2195121951219512) (0.19, 1.2345679012345678) (0.2, 1.25) (0.21, 1.2658227848101264) (0.22, 1.282051282051282) (0.23, 1.2987012987012987) (0.24, 1.3157894736842106) (0.25, 1.3333333333333333) (0.26, 1.3513513513513513) (0.27, 1.36986301369863) (0.28, 1.3888888888888888) (0.29, 1.4084507042253522) (0.3, 1.4285714285714286) (0.31, 1.4492753623188408) (0.32, 1.4705882352941178) (0.33, 1.492537313432836) (0.34, 1.5151515151515154) (0.35000000000000003, 1.5384615384615388) (0.36, 1.5625) (0.37, 1.5873015873015872) (0.38, 1.6129032258064517) (0.39, 1.639344262295082) (0.4, 1.6666666666666667) (0.41000000000000003, 1.6949152542372883) (0.42, 1.7241379310344827) (0.43, 1.7543859649122806) (0.44, 1.7857142857142856) (0.45, 1.8181818181818181) (0.46, 1.8518518518518516)};
\addplot[no marks, very thick, color=green!50!black, dashed] coordinates {
(0.47000000000000003, 1.8867924528301885) (0.48, 1.923076923076923) (0.49, 1.9607843137254901) (0.5, 2.0)
};
\addplot[no marks, very thick, color=orange] coordinates {
(0.22, 1.7662966631854253) (0.23, 1.7413696144591533) (0.24, 1.7172767225686623) (0.25, 1.693980668299439) (0.26, 1.6714464291006441) (0.27, 1.6496411079529687) (0.28, 1.628533777324831) (0.29000000000000004, 1.6080953366768282) (0.3, 1.5882983821497294) (0.31, 1.569117087224276) (0.32, 1.5505270932747737) (0.33, 1.5325054090555408) (0.33999999999999997, 1.5150303182619518) (0.35, 1.49808129439798) (0.36, 1.4816389222614654) (0.37, 1.4656848254281356) (0.38, 1.4502015991769779) (0.39, 1.4351727483538366) (0.4, 1.4205826297180204) (0.41000000000000003, 1.4064163983590119) (0.42000000000000004, 1.392659957807698) (0.43, 1.3792999134994852) (0.44, 1.3663235292756943) (0.45, 1.353718686635172) (0.45999999999999996, 1.341473846470489) (0.47, 1.3295780130427064) (0.48, 1.3180206999658048) (0.49, 1.30679189798667) (0.5, 1.2958820443592856)
};
\addplot[no marks, very thick, color=blue] coordinates {
(0.5, 1.2958820443592856) (0.51, 1.2939287867202471) (0.52, 1.2920430019739713) (0.53, 1.290218991842568) (0.54, 1.2884514418315076) (0.55, 1.2867354256194474) (0.56, 1.2850664021487017) (0.5700000000000001, 1.2834402064231802) (0.58, 1.2818530351846575) (0.59, 1.2803014286906234) (0.6, 1.2787822497750103) (0.61, 1.2772926612619266) (0.62, 1.2758301026489705) (0.63, 1.2743922668048309) (0.64, 1.2729770772545879) (0.65, 1.2715826664683203) (0.66, 1.2702073554318336) (0.67, 1.2688496346654494) (0.6799999999999999, 1.2675081467679596) (0.69, 1.2661816704958226) (0.7, 1.264869106339583) (0.71, 1.2635694635268904) (0.72, 1.262281848361096) (0.73, 1.2610054537933684) (0.74, 1.2597395501220068) (0.75, 1.2584834767132413) (0.76, 1.2572366346416284) (0.77, 1.2559984801540092) (0.78, 1.2547685188680577) (0.79, 1.2535463006239789) (0.8, 1.2523314149155491) (0.81, 1.251123486834146) (0.8200000000000001, 1.2499221734664119) (0.8300000000000001, 1.2487271606927515) (0.8400000000000001, 1.2475381603398377) (0.8500000000000001, 1.2463549076457152) (0.86, 1.2451771590009595) (0.87, 1.2440046899336994) (0.88, 1.2428372933101528) (0.89, 1.241674777725758) (0.9, 1.2405169660649682) (0.91, 1.2393636942104544) (0.9199999999999999, 1.2382148098847459) (0.9299999999999999, 1.237070171609424) (0.94, 1.235929647768737) (0.95, 1.2347931157660827) (0.96, 1.2336604612631867) (0.97, 1.2325315774929906) (0.98, 1.2314063646383295) (0.99, 1.2302847292693855) (1.0, 1.2291665838337407)
};
\end{axis}
\end{tikzpicture}
\caption{\normalfont Overview of our upper bounds on the \poa\ ($y$-axis) for $\yFPA$ and $\yDPA$, respectively, as a function of $\gamma$ ($x$-axis).
(a) \mcce-\poa\ for multi-unit $\yFPA$ with overbidding (Theorem~\ref{thm:OBpoaMUA}).
(b) \mcce-\poa\ for multi-unit $\yFPA$ without overbidding (Theorems~\ref{thm:OBpoaMUA} \& \ref{thm:CCEpoaMUANOB}).
(c) \mcce-\poa\ for single-item $\yDPA$ without overbidding (Theorems \ref{thm:OBpoaMUA}, \ref{thm:CCEpoaMUANOB} \& \ref{thm:CCEsingleItemNpl}).
(d) \mcce-\poa\ for single-item $\yDPA$ without overbidding and $n = 2$ bidders (Theorems~\ref{thm:CCEsingleItemNpl}, \ref{thm:CCE2A} \& \ref{thm:CCE2B}).}
\label{fig:bound-overview}
\end{figure*}

This kind of bid rigging, where the winning bid ``magically'' aligns with the highest losing bid, is also known as \emph{magic number cheating} (see \cite{Ingraham2005}).
We refer the reader to \cite{LW2010,MM06} (and the references therein) for several other bid rigging examples.
Despite the fact that this form of corruption occurs frequently in practice, its negative impact is still poorly understood theoretically and only a few studies exist (mostly in the economics literature, see the related work section).

Our goal is to initiate the study of the social welfare loss caused by corrupt auctioneers in fundamental auction settings. We focus on a basic model that captures the magic number cheating mentioned above and generalizations thereof.
Clearly, more sophisticated bid rigging models are conceivable and we hope that our work will trigger some future work along these lines.

\myparagraph{Capturing Corruption with Hybrid Auctions.}
Consider the single-item auction setting and suppose the auctioneer runs a sealed bid first-price auction. After receipt of all bids, the auctioneer approaches the highest bidder with the offer that they can lower their bid to the second highest bid in exchange for a bribe. If the highest bidder agrees, they win the auction and pay the second-highest bid for the items plus the corresponding bribe to the auctioneer. If the highest bidder disagrees, they still win the auction but pay their bid for the item according to the first-price auction format. 
We assume that the bribe to be paid to the auctioneer is a pre-determined fraction $\gamma \in [0,1]$ 
of the savings of the highest bidder, i.e., the auctioneer's bribe amounts to $\gamma$ times the difference between the highest and second highest bid.
In case of the multi-unit auction setting, the procedure described above is adapted accordingly by offering the winning bidders to lower their bids to the highest losing bid. 

Observe that the payment scheme described above essentially reduces to the winning bidders paying a convex combination of $\gamma$ times their bids and $(1-\gamma)$ times the highest losing bid. As we will argue below, this setting is tantamount to studying a {\emph{hybrid auction (\yDPA)}}, where the items are assigned to the highest bidders (according to the respective single-item or multi-unit auction scheme) and the payments are a convex combination of the first-price and the second-price payments. By varying the parameter $\gamma \in [0, 1]$, \yDPA\ thus interpolates between the respective second-price auction ($\gamma = 0$) and the first-price auction ($\gamma = 1$) schemes.

More elaborate corruption schemes are of course conceivable. For example, the auctioneer might ask for a fixed amount rather than a fraction of the gains. Or, to avoid setting all bids to the magic number, the auctioneer may want to announce different (bribed) bids for every winning bidder. 
To capture more general corruption schemes, we also study what we term \emph{$\gamma$-approximate first-price auctions (\yFPA)} in this paper. Basically, these auctions implement a payment scheme that recovers at least a fraction of $\gamma \in [0,1]$ of the first-price payment rule (formalized below). 
The \yDPA\ also belongs to this class.
Not only does this capture more elaborate bribing schemes, it also handles the situation where some bidders have moral objections against partaking in such a scheme and do not accept the bribe. Additionally, this also enables us to capture corruption schemes with \emph{heterogeneous bidders}, i.e., where the auctioneer handles a different $\gamma_i$ for each bidder $i$.

In our view, the corruption settings described above serve as suitable motivations to analyze the resulting auctions \yDPA\ and \yFPA. But, at the same time, we feel that the study of such \emph{hybrid} auction formats is interesting in its own right, purely from an auction design perspective. For example, tight bounds on the price of anarchy (as a function of $\gamma$) provide insights on which payment rule should ideally be used to reduce the inefficiency.

\myparagraph{Our Contributions.} 
We study the inefficiency of equilibria of \yFPA\ and \yDPA, both in the single-item and the multi-unit auction setting.
More specifically, our goal is to obtain a precise understanding of the \emph{(robust) price of anarchy (POA)} \cite{KP99,R15,syrgkanis2013composable}. 
We opt for the price of anarchy notion here because it is one of the most appealing and widely accepted measures to assess the efficiency of equilibria, especially in the context of social welfare analysis.
We focus on the analysis of the robust price of anarchy under the complete information setting, incorporating equilibrium notions ranging from pure Nash equilibria (\pne) to coarse correlated equilibria (\mcce).\footnote{Several bounds are based on an adapted smoothness approach and extend to the incomplete information setting; see the extensions section below for more details.}
Moreover, we analyze the price of anarchy distinguishing between the case when bidders can overbid and when they cannot overbid their actual valuations for the items.

The main results that we obtain in this paper are summarized below (see Figure~\ref{fig:bound-overview} for an overview). 
Without any restrictions on the bids, we obtain the following result:
\begin{enumerate}
\item We prove an upper bound of $(1/\gamma) \cdot e^{1/\gamma}/(e^{1/\gamma} - 1)$ on the coarse correlated $\poa$ (\mcce-\poa) of any \yFPA\ in the multi-unit auction setting {when bidders can overbid}; see Figure~\ref{fig:bound-overview}(a).
Our upper bound follows from a suitable adaptation of the smoothness technique for multi-unit auctions \cite{syrgkanis2013composable,deKeijzer2013}. Further, by means of a single-item \yDPA, we prove a matching lower bound over the entire range $\gamma \in [0, 1]$.
{As a result, our bound settles the \mcce-\poa\ of \yFPA\ exactly for both the single-item and multi-unit auction setting over the entire range of $\gamma \in [0, 1]$.}
\end{enumerate} 

A standard assumption that often needs to be made to derive meaningful bounds on the POA is that the bidders cannot overbid (see also related work section).
Under the no-overbidding assumption, a more fine-grained landscape of the price of anarchy emerges:
\begin{enumerate}\setcounter{enumi}{1}
\item We show that the pure \poa\ (\pne-\poa) of \yDPA\ in the multi-unit auction setting is $1$ for $\gamma \in (0, 1)$. 
{This result is complemented by $\text{\pne-\poa} = 2.1885$ for $\gamma = 0$ \cite{BMTT19} and $\text{\pne-\poa} = 1$ for $\gamma = 1$ \cite{deKeijzer2013}.
Note that this reveals an interesting transition at $\gamma = 0$.}
\item We prove that the \mcce-\poa\ of any \yFPA\ in the multi-unit auction setting is upper bounded by 
\[
-(1-\gamma) \mathcal{W}_{-1}\left(-\frac{1}{e^{(2-\gamma)/(1-\gamma)}}\right),
\]
for $\gamma \lessapprox 0.607$ where $\mathcal{W}$ is the Lambert-$\mathcal{W}$ function. 
Combined with our upper bound (first contribution above) for $\gamma > 0.607$ (i.e. with overbidding), we obtain the combined bound depicted in Figure~\ref{fig:bound-overview}(b).
\item We prove that the correlated \poa\ (\mce-\poa) of \yDPA\ in the single-item auction setting is $1$ for every $\gamma \in (0, 1)$.
{This result together with $\text{\mce-\poa} = 1$ for $\gamma = 1$ \cite{feldman2016correlated} and our next result, shows that $\text{\mce-\poa} = 1$ for the entire range $\gamma \in [0, 1]$.}
\item We show that the \mcce-\poa\ of \yDPA\ in the single-item auction setting with $n$ bidders is bounded as indicated in Figure~\ref{fig:bound-overview}(c). {Concretely, we prove an upper bound of $1/(1-\gamma)$ and combine it with the multi-unit bounds from Figure~\ref{fig:bound-overview}(b)}. 
\item We show that the \mcce-\poa\ of \yDPA\ in the single-item auction setting with $n = 2$ bidders is bounded as indicated in Figure~\ref{fig:bound-overview}(d). 
This bound is derived by combining three different upper bounds, one of which the $1/(1-\gamma)$ bound from Figure~\ref{fig:bound-overview}(c). 
Technically, this is the most challenging part of the paper as we use the cumulative distribution functions (CDF) of equilibrium bids directly to derive these bounds. 
\end{enumerate}

\myparagraph{Implications.}
Altogether, our bounds provide a rather complete picture of the \poa\ of \yFPA\ and, in particular, \yDPA , for different equilibrium notions both in the single-item and the multi-unit auction setting and with and without overbidding. 
If the bidders can overbid then our (tight) bound on the \mcce-\poa\ (Figure~\ref{fig:bound-overview}(a)) shows that the \poa\ increases from a small constant $e/(e-1)$ to infinity as $\gamma$ decreases from $1$ to $0$. Thinking about \yDPA, we feel that this makes sense intuitively: As $\gamma$ approaches $0$, the auctioneer only withholds a small fraction of the surplus and the bidders are thus incentivized to exploit the corruption (as it comes at a low cost). In contrast, as $\gamma$ approaches $1$, the auctioneer charges a significant fraction of the surplus and while the bidders still have good reasons to join the corruption (explained below) they exploit it less drastically as it comes at a large cost.

Our bounds reveal that there is a substantial difference in the \poa\ depending on whether or not bidders can overbid; e.g., compare the bounds depicted in (a) and (b) (multi-unit setting), or (a) and (c) (single-item setting) in Figure~\ref{fig:bound-overview}. 
In general, it is not well-understood how the no-overbidding assumption influences the \poa\ of auctions; this question also relates to the \emph{price of undominated anarchy} studied by Feldman et al.~\cite{feldman2016correlated} (see related work below). 
Our bounds shed some light on this question for \yFPA. \\[-1ex]

\emph{Technical Merits.}
Our upper bounds for $\yFPA$ are based on an adapted smoothness notion which relates directly to the highest marginal winning bids (i.e., first-price payments). In particular, our smoothness argument does \emph{not} exploit the second-price payments of $\yDPA$ at all. As it turns out, this allows us to derive \emph{tight} bounds for $\yDPA$ and, more generally, for $\yFPA$ when bidders can overbid. On a high level, our results thus reveal that the (approximate) first-price payments are the determining component of such composed payment schemes. This triggers some interesting questions for future research.

In contrast, when overbidding is not allowed it becomes crucial to exploit the second-price payments of $\yDPA$ to obtain improved bounds. The price of anarchy of both the first-price auction and the second-price auctions is well understood in the single-item setting. However, it is not straightforward to extend these bounds to the combined payment scheme of \yDPA. In fact, to prove our bounds in Theorem \ref{thm:CCE2A} and Theorem \ref{thm:CCE2B}, we exploit constraints on the CDF of the first-price payments which are imposed by the CCE conditions;
but, additionally, we have to get a grip on the CDF of the second-price payments. We need several new insights (and a somewhat involved numerical analysis) to derive these bounds. 

\myparagraph{Extensions.}
Although we focus on the complete information setting in this paper, most of our bounds can be lifted to the incomplete information setting as introduced by Harsanyi \cite{Harsanyi67}, where players have private valuation functions drawn from a common prior.
Several of our upper bounds are based on an adapted smoothness approach for multi-unit auctions which extends (basically) directly to this incomplete information setting and (mixed) Bayes-Nash equilibria. 
All bounds displayed in Figure 1(a--c) remain valid for Bayes-Nash equilibria as well. 
Such smoothness-based extensions are by now rather standard.\footnote{More specifically, 
these extensions can be proven along similar lines of arguments as in \cite{deKeijzer2013}, where smoothness is used to bound the Bayes-Nash \poa\ of (standard) multi-unit auctions.}
Given that these extensions cause quite some notational overhead without adding much analytically, we defer further details to the full version of the paper.
  
\myparagraph{Related Work.}
There is a large body of research in economics studying collusion among bidders in auctions (see, e.g., \cite{GM87,AM92} for some standard references). Collusion between the auctioneer and the bidders in the form of bid rigging (as considered in this paper) has also been studied in the literature, but less intensively. Most existing works study certain aspects of equilibrium outcomes (e.g., equilibrium structure, auctioneer surplus, seller revenue, optimal bribe schemes, etc.); for an overview of the existing works along these lines, see \cite{LW00,MM06,LW2010} and the references therein.

The specific bid rigging model that we consider here was first studied by Menezes and Monteiro \cite{MM06} and a slight generalization thereof by Lengwiler and Wolfstetter \cite{LW00}, both for the single-item auction setting. These works consider a Bayesian setting where the valuations are independent draws from a common distribution function. Menezes and Monteiro \cite{MM06} prove the existence of symmetric equilibrium bidding strategies and derive an optimal bribe function for the auctioneer. The authors also study a \emph{fixed-price} bribe scheme, where the auctioneer charges a fixed amount that is independent of the gained surplus. 

Subsequently, Lengwiler and Wolfstetter \cite{LW2010} study a more complex bid rigging scheme for the single-item auction setting, where the auctioneer additionally offers the second highest bidder to increase their bid. To the best of our knowledge, none of the existing works studied the price of anarchy of corrupt auctions. 

Studying the price of anarchy in auctions has recently received a lot of attention; we refer to the survey paper by Roughgarden et al.~\cite{RST17} for an overview. A lot of work has gone into deriving bounds on the price of anarchy for various auction formats, both in the complete and incomplete information setting. The \emph{smoothness notion}, originally introduced by Roughgarden in \cite{R15} to analyze the robust price of anarchy of strategic games, turned out to be very useful in an auction context as well. Syrgkanis and Tardos \cite{syrgkanis2013composable} build upon this notion and provide a powerful (smoothness-based) toolbox for the analysis of a broad range of auctions that fall into their composition framework.

With respect to the multi-unit auction setting, de Keijzer et al.~\cite{deKeijzer2013} use an adapted smoothness approach to derive bounds on the \poa\ of Bayes-Nash equilibria for the first-price and the second-price multi-unit auction (mostly focussing on the setting with no overbidding). 
Our bounds coincide with theirs for the extreme points $\gamma = 0$ and $\gamma = 1$. For the more general class of subadditive valuations, the \poa\ of Bayes-Nash equilibria for the first-price multi-unit auction is 2, which follows from \cite{deKeijzer2013} and \cite{CKST16}. 
Birmpas et al.~\cite{BMTT19} recently settled the \pne-\poa\ of the second-price multi-unit auction and show that it is $2.1885$. 

Our bounds on the \mcce-\poa\ are also based on a smoothness approach. We use an adapted smoothness notion (inspired by \cite{deKeijzer2013,syrgkanis2013composable}) to derive our bounds, both in the overbidding and the no-overbidding setting. Interestingly, our smoothness proofs crucially exploit that the payments recover at least a faction of $\gamma$ of the first-price payments (but never exceed them). 
As a side result, Syrgkanis and Tardos \cite{syrgkanis2013composable} also derive a first bound on the \mce-\poa\ for \yDPA\ in the single-item auction setting; our bound (significantly) improves on theirs and exploits some additional ideas.

The \poa\ of the first-price and second-price auction has been investigated intensively for both the single-item and the multi-unit auction setting. An assumption that often needs to be made to derive meaningful bounds is that the bidders cannot overbid. 
For example, it is folklore that the \pne-\poa\ of the second-price single-item auction is unbounded if the bidders can overbid. On the other hand, it is one if bidders cannot overbid. In the second-price single-item auction, overbidding is a dominated strategy for each bidder and the no-overbidding assumption thus emerges naturally. But this might not be true in general. For example, for the second-price multi-unit auction, this analogy breaks already. 
We refer to \cite{feldman2013simultaneous} for a more general discussion of the no-overbidding assumption.

In general, the impact that the no-overbidding assumption has on the price of anarchy is not well-understood. This aspect also relates to the \emph{price of undominated anarchy} studied by Feldman et al.~\cite{feldman2016correlated}. The authors prove a clear separation for the \poa\ in single-item first-price auctions: While the \mce-\poa\ is $1$ (even with overbidding), the \mcce-\poa\ increases to $1.229$ (without overbidding) and $e/(e-1)$ (with overbidding).
A similar separation holds for the multi-unit auction setting and the uniform price auction, where the \pne-\poa\ is $(e-1)/e$ (without overbidding) \cite{MT15} and $2.1885$ (with overbidding) \cite{BMTT19}.
Our results contribute to this line of research also because we show that the \poa\ might improve significantly under the no-overbidding assumption.

\section{Preliminaries}
\label{sec:preliminaries}

\myparagraph{Standard Auction Formats.}
We focus on the description of the multi-unit auction setting; the single-item auction setting follows as a special case (choosing $k = 1$ below). 
In the multi-unit auction setting, there are $k \ge 1$ identical items (or goods) that we want to sell to $n \ge 2$ bidders (or players). We identify the set of bidders $N$ with $[n] = \set{1, \dots, n}$. 
Each bidder $i$ has a non-negative and non-decreasing valuation function $v_i: \set{0, \dots, k} \rightarrow \mathbb{R}_{\ge 0}$ with $v_i(0) = 0$, where $v_i(j)$ specifies $i$'s valuation for receiving $j$ items.
We assume that for each bidder $i \in N$ the valuation function $v_i$ is \emph{submodular} or, equivalently, that the marginal valuations are non-increasing, i.e., for every $j \in [k-1]$, $v_i(j) - v_i(j-1) \ge v_i(j+1) - v_i(j)$. 
The valuation function $v_i$ is assumed to be private information, i.e., it is only known to bidder $i$ themselves.
We use $\vals = (v_1, \dots, v_n)$ to denote the profile (or vector) of the valuation functions of the bidders. 
We assume that the bidders submit their bids according to the following \emph{standard format}: Each bidder $i$ submits a bid vector $\bids_i = (b_i(1),\dots, b_i(k))$ of $k$ non-negative and non-increasing \emph{marginal bids}, i.e., $b_i(j)$ specifies the additional amount $i$ is willing to pay for receiving $j$ instead of $j-1$ items. The overall amount that $i$ bids for receiving $q$ items is thus $\sum_{j = 1}^q b_i(j)$. {For $k=1$ we write $\bids_i = b_i(1)$.}

Consider a multi-unit auction setting and suppose the auctioneer uses an \emph{auction mechanism $\mech$} to determine an assignment of the items and the respective payments of the bidders. 
Each bidder submits their bid vector $\bids_i$ to the mechanism. Based on the bidding profile $\bids = (\bids_1, \dots, \bids_n)$, the mechanism $\mech$ orders the submitted marginal bids non-increasingly (breaking ties in an arbitrary but consistent way) and assigns the $k$ items to the bidders who submitted the $k$ highest marginal bids (according to this order). {We use $\beta_j(\bids)$ to refer to the $j$-th lowest winning (marginal) bid in $\bids$, i.e., $\beta_k(\bids) \geq \ldots \geq \beta_1(\bids)$.}
We use $\alloc(\bids) = (x_1(\bids), \dots, x_n(\bids))$ to refer to the resulting allocation, where $x_i(\bids)$ specifies the number of items that bidder $i$ receives; $x_i(\bids) = 0$ if $i$ does not receive any item.
Each bidder $i$ who receives at least one item is called a \emph{winner}.

There are two standard payment schemes that determine for each winner $i$ the respective payment $p_i(\bids)$; we adopt the convention that $p_i(\bids) = 0$ for each bidder $i$ who is not a winner.
\begin{itemize}
    \item \emph{First-price payment scheme}: Every bidder $i$ pays their bid for the received items, i.e., $p_i(\bids) = \sum_{j=1}^{x_i(\bids)} b_i(j)$
    \item \emph{Second-price payment scheme}: Every bidder $i$ pays the highest losing bid $\bar{p}(\bids)$ for each received item, i.e., $p_i(\bids) = x_i(\bids) \bar{p}(\bids)$
\end{itemize}
Suppose we fix the payment scheme of mechanism $\mech$ according to one of these schemes. We refer to mechanism $\mech$ with the first-price payment or the second-price payment scheme, respectively, as \emph{$\fpa$} or \emph{$\spa$}.\footnote{We remark that in the multi-unit auction setting these auctions are usually referred to as \emph{discriminatory price auction} and \emph{uniform price auction}; however, here we stick to the given naming convention to align it with the common terminology of the single-item auction setting.}

The utility $u_i^{v_i}(\bids)$ of bidder $i$ is defined as the total valuation minus the payment for receiving $x_i(\bids)$ items, i.e., $u^{v_i}_i(\bids) = v_i(x_i(\bids)) - p_i(\bids)$; note that $u^{v_i}_i(\bids) = 0$ by definition if bidder $i$ is not a winner. 
Whenever $v_i$ is clear from the context, we simply denote the utility of bidder $i$ by $u_i(\bids)$.
We assume that each bidder strives to maximize their utility. 

Finally, we introduce some standard assumptions that we use throughout this paper; we adopt the convention that the first two must always be satisfied by a mechanism. 
\begin{enumerate}    
    \item 
    \emph{No positive transfers (NPT):} The payment of each bidder $i$ is non-negative, i.e., $p_i(\bids) \ge 0$.
   \item 
\emph{Individual rationality (IR):} The payment of each bidder $i$ does not exceed their bid, i.e., $p_i(\bids) \le \sum_{j = 1}^{x_i(\bids)} b_{i}(j)$. 
  \item 
\emph{No overbidding (\nob):} The bid vector of each bidder $i$ does not exceed their valuations, i.e., for every $q \in [k]$, $\sum_{j=1}^q b_i(j) \allowbreak \le v_i(q)$.
\end{enumerate}

\myparagraph{Approximate First-Price Auctions.}
In this paper, we also consider auctions with first-price approximate payment schemes. The allocation is still determined as above, but the payment scheme is relaxed as follows: We say that a mechanism $\mech$ with payment rule $\pays = (p_1(\bids), \dots, p_n(\bids))$ is a \emph{$\gamma$-approximate first-price auction (\yFPA)} for some $\gamma \in [0, 1]$ if it always recovers at least a fraction of $\gamma$ of the first-price payments, i.e., for every bidding profile $\bids$, $\sum_{i \in N} p_i(\bids) \ge \gamma \sum_{j=1}^k \beta_j(\bids)$. 
Further, if for every bidding profile $\bids$ it holds that $\sum_{i\in N} p_i(\bids) \leq \sum_{j=1}^k \beta_j(\bids)$ then we call the mechanism \emph{first-price dominated}.
Note that every mechanism that satisfies individual rationality must be first-price dominated.

\myparagraph{Equilibrium Notions and the Price of Anarchy.}
We focus on the complete information setting here. Below, we briefly review the different equilibrium notions used in this paper. 
A bidding profile $\bids = (\bids_1, \dots, \bids_n)$ is a \emph{pure Nash equilibrium (PNE)} if no bidder has an incentive to deviate unilaterally; more formally, $\bids$ is a PNE if for every bidder $i$ and every bidding profile $\bids'_i$ of $i$ it holds that $u_i(\bids) \ge u_i(\bids'_i, \bids_{-i})$. 
Here we use the standard notation $\bids_{-i}$ to refer to the bid vector $\bids$ with the $i$th component being removed; $(\bids'_i, \bids_{-i})$ then refers to the bid vector $\bids$ with the $i$th component being replaced by $\bids'_i$. 

We also consider randomized bid vectors. Suppose bidder $i$ chooses their bid vectors randomly according to a probability distribution $\Bids_i$, independently of the other bidders. Let $\Bids = \prod_{i \in N} \Bids_i$ be the respective product distribution. Then $\Bids$ is a \emph{mixed Nash equilibrium (MNE)} if for every bidder $i$ and every bid vector $\bids'_i$ it holds that $\E_{\bids \sim \Bids} [u_i(\bids)] \ge \E_{\bids \sim \Bids} [u_i(\bids'_i, \bids_{-i})]$.
We may also allow correlation among the bidders. Let $\Bids$ be a joint distribution over bidding profiles of the bidders. 
Then $\Bids$ is a \emph{correlated equilibrium (CE)} if for every bidder $i \in N$ and for every {deviation function} $m_i(\bids_i)$ it holds that $\E_{\bids \sim \Bids} [u_i(\bids)] \ge \E_{\bids \sim \Bids} [u_i(m_i(\bids_i), \bids_{-i})]$. Intuitively, conditional on bid vector $\bids_i$ being realized, $i$ has no incentive to deviate to any other bid vector $m_i(\bids_i)$.
The most general equilibrium notion that we consider in this paper is defined as follows: 
Let $\Bids$ be a joint distribution over bidding profiles of the bidders. Then $\Bids$ is a \emph{coarse correlated equilibrium (CCE)} if for every bidder $i$ and every bid vector $\bids'_i$ it holds that $\E_{\bids \sim \Bids} [u_i(\bids)] \ge \E_{\bids \sim \Bids} [u_i(\bids'_i, \bids_{-i})]$.
Below, we also use $\pne(\vals)$, $\mne(\vals)$, $\mce(\vals)$ and $\mcce(\vals)$ to refer to the sets of pure, mixed, correlated and coarse correlated equilibria with respect to a valuation profile $\vals = (v_1, \dots, v_n)$, respectively. 

We define the \emph{social welfare} of a bidding profile $\bids = (\bids_1, \dots, \bids_n)$ as the overall valuation obtained by the bidders, i.e., $\SW(\bids) = \sum_{i \in N} v_i(x_i(\bids))$. 
Note that although social welfare is defined independently of the payments, we can equivalently write $SW(\bids) = \sum_{i\in N} u_i(\bids) + p_i(\bids)$.
The expected social welfare of a joint distribution $\Bids$ over bidding profiles is then defined as $\pe{\SW(\Bids)} = \E_{\bids \sim \Bids} [\SW(\bids)]$.
We use $\opt(\vals)$ to refer to an assignment that maximizes the social welfare with respect to the valuation functions $\vals = (v_1, \dots v_n)$; i.e., $\SW(\opt(\vals)) = \sum_{i \in N} v_i(x_i(\vals))$ is the maximum social welfare achievable for the bidders. The assignment $\opt(\vals)$ is also called a \emph{social optimum}.

The \emph{price of anarchy} is defined as the maximum ratio of the social welfare of the social optimum and the (expected) social welfare of an equilibrium. 
Let $X$ be a placeholder that refers to one of the equilibrium notions above, i.e., $X \in \set{\pne, \mne, \mce, \mcce}$. More formally, given a valuation profile $\vals = (v_1, \dots, v_n)$, the \emph{price of anarchy} with respect to $X$ (or \emph{$X$-\poa} for short) is defined as $\text{$X$-\poa}(\vals) = \sup_{\Bids \in X(\vals)}{\SW(\opt(\vals))}/\pe{\SW(\Bids)}$. 
The price of anarchy of an auction format then refers to the worst-case price of anarchy over all possible valuation profiles, i.e., $\text{$X$-\poa} = \sup_{\vals} \text{$X$-}\poa(\vals)$. 
We use \emph{\pne-\poa}, \emph{\mne-\poa}, \emph{\mce-\poa} and \emph{\mcce-\poa} to refer to the respective price of anarchy notions.

\section{Capturing Corruption with \yFPA}

We give a formal description of the model that we consider 
and elaborate on its relation to the \yhybrid. 
We also introduce the adapted smoothness approach.

\myparagraph{Corruption in Auctions.}
Suppose the bidders submit their bid vectors $\bids = (\bids_1, \dots, \bids_n)$ in a ``sealed manner'', i.e., at first only the auctioneer sees the bidding profile $\bids$.\footnote{It is important to realize though that the final bids, which might not necessarily correspond to the submitted ones, might have to be revealed eventually because the bidders might want to verify the ``soundness'' of the outcome of the auction.}
After receipt of the bidding profile $\bids$, the auctioneer runs a first-price multi-unit auction (see Section~\ref{sec:preliminaries}) to obtain the respective assignment $\alloc(\bids) = (x_1(\bids), \dots, x_n(\bids))$ and payments $\pays(\bids) = (p_1(\bids), \dots, p_n(\bids))$ but does not reveal this outcome yet. 
The auctioneer then approaches each winning bidder $i$ individually with the offer that they can lower all their $x_i(\bids)$ winning bids to the highest losing bid $\bar{p}(\bids)$ (while receiving the same number of items), in exchange for a fixed fraction $\gamma \in [0,1]$ of the surplus gained by $i$. The bidder can either reject or accept this offer. 
If bidder $i$ rejects the offer, the allocation $x_i(\bids)$ and respective payment $p_i(\bids)$ remain unmodified. 
If bidder $i$ accepts the offer, they receive the $x_i(\bids)$ items at a reduced price of $\bar{p}(\bids)$ each, but additionally pay a fee $\fee^\gamma_i$ of $\gamma$ times the surplus to the auctioneer; more formally, the total payment of a winning bidder $i$ who accepts the offer is 
\[
p^\gamma_i(\bids) = x_i(\bids) \bar{p}(\bids) + \fee^\gamma_i(\bids)
\enspace\text{where}\enspace
\fee^\gamma_i(\bids) = \gamma \sum_{j = 1}^{x_i(\bids)} (b_i(j) - \bar{p}(\bids)).
\]
We also refer to this setting as the \emph{\ycfpa}.\footnote{%
As the final payments are dependent on $\gamma$, we (implicitly) assume that the bidders are aware of this parameter when considering the complete information setting here (much alike it is assumed that the bidders know the used payment scheme in other auction formats). 
}

Note that the change in the bid vector of player $i$ conforms to 
the imposed bidding format, i.e., the modified marginal bids of bidder $i$ are still non-negative and non-increasing. 
It is not hard to show that it is a dominant strategy for every winning bidder to accept the offer of the auctioneer, independently of the parameter $\gamma$ (see Appendix~\ref{sec:AcceptOffer}).
Subsequently, we assume that each winning bidder always accepts the offer.

\begin{toappendix}

\label{app:model}
\subsection{Dominant Strategy to Accept Offer} \label{sec:AcceptOffer}

We show that it is a dominant strategy for every winning bidder to always accept the offer of the auctioneer (independent of $\gamma$). 

\begin{proposition}\label{ass:always-accept}
Fix some $\gamma \in [0,1]$ and consider a \newline\ycfpa. 
We can assume without loss of generality that each winning bidder always accepts the offer of the auctioneer. 
\end{proposition}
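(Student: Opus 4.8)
The plan is to show that, for a fixed winning bidder $i$ and any fixed bids $\bids_{-i}$ of the others, accepting the auctioneer's offer yields at least as much utility as rejecting it, regardless of $\gamma \in [0,1]$. Since the allocation $x_i(\bids)$ (and hence the valuation term $v_i(x_i(\bids))$) is unaffected by whether bidder $i$ accepts or rejects — the number of items received is the same — it suffices to compare the two payments and show that the accept-payment never exceeds the reject-payment.

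Concretely, I would write out the two quantities. If $i$ rejects, the payment is the first-price payment $p_i(\bids) = \sum_{j=1}^{x_i(\bids)} b_i(j)$. If $i$ accepts, the payment is
\[
p^\gamma_i(\bids) = x_i(\bids)\,\bar{p}(\bids) + \gamma \sum_{j=1}^{x_i(\bids)} \bigl(b_i(j) - \bar{p}(\bids)\bigr)
= (1-\gamma)\, x_i(\bids)\, \bar{p}(\bids) + \gamma \sum_{j=1}^{x_i(\bids)} b_i(j).
\]
Thus $p^\gamma_i(\bids)$ is a convex combination of $x_i(\bids)\,\bar{p}(\bids)$ (the second-price payment) and $\sum_{j=1}^{x_i(\bids)} b_i(j)$ (the first-price payment), with weights $1-\gamma$ and $\gamma$. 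The key observation is that each winning marginal bid $b_i(j)$ for $j \le x_i(\bids)$ is at least the highest losing bid $\bar{p}(\bids)$, simply by definition of the allocation rule (the $k$ highest marginal bids win, so any winning marginal bid weakly exceeds any losing one). Hence $x_i(\bids)\,\bar{p}(\bids) \le \sum_{j=1}^{x_i(\bids)} b_i(j)$, and therefore the convex combination $p^\gamma_i(\bids)$ is at most $\sum_{j=1}^{x_i(\bids)} b_i(j) = p_i(\bids)$. It follows that $u_i$ under acceptance is at least $u_i$ under rejection, so accepting is (weakly) dominant; ties can be broken in favor of accepting without loss of generality.

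There is really no serious obstacle here — the statement is essentially a one-line monotonicity argument once the accept-payment is rewritten as a convex combination. The only mild subtlety worth a sentence is that the comparison must be made pointwise in $\bids_{-i}$ and in bidder $i$'s own submitted bid $\bids_i$: the claim is that \emph{after} the auction has run and the offer is extended, accepting dominates rejecting; it is not a claim about which $\bids_i$ to submit in the first place. Since this holds for every realization of $\bids$, it holds in expectation under any (possibly correlated or mixed) distribution over the others' bids as well, which justifies assuming throughout the paper that every winning bidder accepts — the resulting payment rule is exactly $p^\gamma_i$ as displayed, and I would note in passing that this rule is individually rational (hence first-price dominated) precisely because of the same inequality $x_i(\bids)\,\bar p(\bids) \le \sum_{j=1}^{x_i(\bids)} b_i(j)$.
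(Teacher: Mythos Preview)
Your proposal is correct and follows essentially the same approach as the paper: rewrite the accept-payment as a convex combination of the second-price and first-price payments, use that every winning marginal bid is at least $\bar{p}(\bids)$ to conclude $p^\gamma_i(\bids) \le p_i(\bids)$, and note that the allocation is unaffected so accepting weakly dominates rejecting for every realization of $\bids$. The paper additionally remarks when the inequality is strict (namely unless $\gamma=1$ or all of $i$'s winning bids equal $\bar{p}(\bids)$), but this is inessential to the claim.
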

\begin{proof}
Observe that the total payment to be made by a winning bidder $i$ who accepts the offer becomes 
\begin{equation}\label{eq:hybrid-payment:app}
p^\gamma_i(\bids) 
= x_i(\bids) \bar{p}(\bids) + \fee^\gamma_i(\bids) 
= \gamma \sum_{j = 1}^{x_i(\bids)} b_i(j) + (1-\gamma) x_i(\bids) \bar{p}(\bids).
\end{equation}
Clearly, each winning bid $j$ of $i$ satisfies $b_i(j) \ge \bar{p}(\bids)$. 
Thus, $p^\gamma_i(\bids) \le \sum_{j = 1}^{x_i(\bids)} b_i(j) = p_i(\bids)$, where $p_i(\bids)$ is the payment that $i$ would have to pay when rejecting the offer. 
In fact, this inequality is strict unless all winning bids of $i$ are equal to $\bar{p}(\bids)$ or $\gamma = 1$. In both these cases, the offer made by the auctioneer does not have any effect for $i$ (as there is no surplus generated in the former case, and no difference in the final payment of $i$ in the latter case). 
Said differently, each winning bidder can only benefit from accepting the offer. Observe also that the above arguments hold for every winning bidder independently of what the other bidders do. Further, the final allocation remains invariant (assuming a consistent tie breaking rule).
We conclude that it is a dominant strategy for every winning bidder to accept the offer of the auctioneer. 
\qed
\end{proof}

\end{toappendix}

\myparagraph{Hybrid Auction Scheme.}
We introduce our novel hybrid auction scheme, which we term \emph{\yhybrid} (or \emph{$\yDPA$} for short): 
\yDPA\ uses the same allocation rule as in the multi-unit auction setting (see Section~\ref{sec:preliminaries}), but uses a convex combination of the first-price and second-price payment scheme (parameterized by $\gamma$), i.e., 
\begin{equation}\label{eq:hybrid-payment}
p^\gamma_i(\bids) 
= \gamma \sum_{j = 1}^{x_i(\bids)} b_i(j) + (1-\gamma) x_i(\bids) \bar{p}(\bids).
\end{equation}
Said differently, \yDPA\ interpolates between \spa\ ($\gamma = 0$) and \fpa\ ($\gamma = 1$) as $\gamma$ varies from $0$ to $1$. 
{It is immediate that every \yDPA\ is a \yFPA.}
We also use $p^\gamma(\bids)$ to refer to the above payment in the single-item auction setting.

The following proposition follows immediately from the discussion above and allows us to focus on the \poa\ of \yDPA\ to study \ycfpa s.

\begin{proposition}\label{prop:equivalence}
Fix some $\gamma \in [0,1]$. Then the $\ycfpa$ and $\yDPA$ admit the same set of equilibria and have identical social welfare objectives. Therefore, the price of anarchy for both these settings is the same. 
\end{proposition}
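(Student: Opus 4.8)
The plan is to argue that, once the weakly dominant behavior of winning bidders is fixed, the $\ycfpa$ and the $\yDPA$ are the \emph{same game} --- same players, same strategy sets, same utility functions --- so that all four equilibrium notions, the social welfare objective, and hence the price of anarchy coincide trivially.

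First I would invoke Proposition~\ref{ass:always-accept}: since accepting the auctioneer's offer is a (weakly) dominant action for every winning bidder, we may restrict attention without loss of generality to $\ycfpa$ strategy profiles in which every winning bidder accepts. Bidders who are indifferent --- when $\gamma = 1$, or when all their winning bids already equal $\bar{p}(\bids)$ --- can be taken to accept without changing anyone's utility, which is why the reduction is harmless; more formally, dropping the forced accept/reject coordinate gives a utility-preserving bijection between $\ycfpa$ profiles on which all winners accept and $\yDPA$ profiles. Under this convention each bidder's only decision is the bid vector $\bids_i$, exactly as in $\yDPA$, so the two settings share the strategy space $\prod_{i \in N}\{\text{marginal-bid vectors}\}$.

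Next I would verify that the induced outcome agrees profile-by-profile. The allocation is identical because both settings assign the $k$ items to the $k$ highest marginal bids in $\bids$ (the $\ycfpa$ runs a first-price multi-unit auction for the allocation, which is precisely the allocation rule of $\yDPA$). For payments, the algebra in the proof of Proposition~\ref{ass:always-accept} shows that an accepting winner $i$ pays $x_i(\bids)\bar{p}(\bids) + \fee^\gamma_i(\bids) = \gamma\sum_{j=1}^{x_i(\bids)} b_i(j) + (1-\gamma)\,x_i(\bids)\,\bar{p}(\bids)$, which is exactly the $\yDPA$ payment \eqref{eq:hybrid-payment}; non-winners pay $0$ in both. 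Hence $u_i(\bids) = v_i(x_i(\bids)) - p^\gamma_i(\bids)$ is the same function of $\bids$ in both settings for every $i$, and by linearity of expectation the same holds for expected utilities under any randomized or correlated profile $\Bids$.

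With strategy sets and utility functions identified, the best-response relations coincide, so for every valuation profile $\vals$ the sets $\pne(\vals)$, $\mne(\vals)$, $\mce(\vals)$, $\mcce(\vals)$ are the same in the two settings. Since $\SW(\bids) = \sum_{i \in N} v_i(x_i(\bids))$ depends only on the (identical) allocation, the (expected) social welfare of every profile --- in particular of every equilibrium and of the social optimum $\opt(\vals)$ --- agrees; thus $\text{$X$-\poa}(\vals)$ agrees for each $X \in \{\pne,\mne,\mce,\mcce\}$ and each $\vals$, and taking the supremum over $\vals$ yields equality of the price of anarchy. There is no real obstacle here: the entire analytic content sits in Proposition~\ref{ass:always-accept} (the dominant-strategy claim), and the rest is bookkeeping --- the only point to handle with a little care is the indifferent-bidder boundary case, already addressed above.
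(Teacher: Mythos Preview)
Your proposal is correct and matches the paper's own treatment: the paper does not give a formal proof but simply states that the proposition ``follows immediately from the discussion above,'' i.e., from Proposition~\ref{ass:always-accept} and the payment identity~\eqref{eq:hybrid-payment}. You have spelled out exactly this argument --- accepting is weakly dominant, so after fixing that choice the two mechanisms have identical strategy spaces, allocations, payments, and hence utilities --- which is precisely the intended (one-line) justification.
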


\myparagraph{Other Corruption Models.}
In our basic bid rigging model introduced above all winning bidders lower their bids to the highest losing bid. While this magic number bidding phenomenon has been observed in real-life for single-item auctions (as mentioned in the introduction), it might seem somewhat awkward in the multi-unit auction setting.\footnote{We refer to Appendix~\ref{sec:altMagicNumber} for further discussion on how the auctioneer could ``camouflage'' the magic number bidding in this case.}
We therefore consider more general corruption schemes that also capture non-uniform bid rigging. More precisely, most of our upper bounds hold for the more general class of \yFPA\ introduced above. These auctions capture several additional corruption settings.
For example, suppose some bidders never accept the offer of the auctioneer (say due to moral objections) and their payments thus remains the first-price payment. While this setting is not covered by \yDPA, it is covered by \yFPA. As another example, if the auctioneer handles a different fraction $\gamma_i$ for each bidder $i$, the resulting auction is $\gamma$-FPA with $\gamma = \min_{i \in N} \gamma_i$.

\begin{toappendix}
\subsection{Equivalence for Non-Uniform Bid Rigging}
\label{sec:altMagicNumber}

We exemplify the implications of our bounds on the \poa\ of \newline\yDPA\ for an alternative, non-uniform bid rigging model.

As before, the bidders submit their bid vectors $\bids = (\bids_1, \dots, \bids_n)$ to the auctioneer who runs a first-price multi-unit auction. The auctioneer then approaches each winning bidder $i$ individually with the offer that they can lower their $x_i(\bids)$ winning bids. However, in contrast to the basic model, the auctioneer and bidder $i$ agree to ``camouflage'' their bid rigging by bidding the highest losing bid $\bar{p}(\bids)$ plus a fraction $\alpha \in [0,1]$ of the surplus $b_i(j) - \bar{p}(\bids)$ for each $j \in [x_i(\bids)]$. Note that this maintains the relative order among the winning bids and the magic number cheating becomes less obvious (as the winning bids fluctuate more). The remaining surplus of $(1-\alpha) (b_i(j) - \bar{p}(\bids))$ is then split, where the auctioneer withholds a fraction of $\beta \in [0,1]$. As before, bidder $i$ can either reject or accept the offer. But, also here, it is not hard to see that accepting the offer is a dominant strategy. 
The total payment of a winning bidder $i$ is then
\begin{align*}
& p^{(\alpha,\ \beta)}_i(\bids) 
= \sum_{j = 1}^{x_i(\bids)} (\bar{p}(\bids) + \alpha (b_i(j) - \bar{p}(\bids))) + \fee^{(\alpha,\ \beta)}_i(\bids), \quad \text{where} \\
& \qquad \fee^{(\alpha,\ \beta)}_i(\bids)  = \beta \sum_{j = 1}^{x_i(\bids)} (1-\alpha) (b_i(j) - \bar{p}(\bids)).
\end{align*}
After simplifying, we obtain
\begin{align*}
p^{(\alpha,\ \beta)}_i(\bids) 
& = (\alpha + \beta(1-\alpha)) \sum_{j = 1}^{x_i(\bids)} b_i(j) 
+ (1 -\alpha -\beta(1-\alpha)) x_i(\bids) \bar{p}(\bids).
\end{align*}
If we define $\gamma = \alpha + \beta - \alpha\beta$, the above payments $p^{(\alpha,\ \beta)}_i$ are equivalent to $p_i^\gamma$ as defined in \eqref{eq:hybrid-payment}. Note also that this mapping satisfies $\gamma \in [0,1]$ for every $\alpha, \beta \in [0,1]$. 
Said differently, given $\alpha, \beta \in [0,1]$ the price of anarchy of the above non-uniform bid rigging scheme is determined by the price of anarchy of \yDPA\ with $\gamma = \alpha + \beta - \alpha\beta$.

\end{toappendix}

\subsection{Adapted Smoothness Notion} \label{secSmoothNotion}

We introduce our adapted smoothness notion (based on the ones given in \cite{syrgkanis2013composable,deKeijzer2013}) to derive upper bounds on the coarse correlated price of anarchy of \yhybrid.

Given a bidding profile $\bids$, we let $\beta_j(\bids)$ refer to the $j$th lowest winning bid under $\bids$. 
\begin{definition}\label{apx:def:weak-smooth-new}
A mechanism $\mathcal{M}$ for the multi-unit auction setting is \emph{$(\lambda, \mu)$-smooth}  for some $\lambda > 0$ and $\mu \ge 0$ if 
for every valuation profile $\vals$ and 
for each bidder $i \in N$
there exists a (possibly randomized) deviation $\Bids'_i$
such that for every bidding profile $\bids$ we have
\[
\sum_{i \in N} \pel[\bids_i' \sim \Bids_i']{u_i(\bids'_i, \bids_{-i})} 
\ge 
\lambda \SW(\opt(\vals)) - \mu \sum_{j=1}^{k} \beta_j(\bids).
\] 
\end{definition} 
In essence, this definition comes close to the \emph{weak smoothness} definition in \cite{syrgkanis2013composable}, but relates more directly to the winning bids in the multi-unit auction setting. A similar definition is also used in \cite{deKeijzer2013}, but there it is imposed on a per-player basis and used for the Bayesian setting.

\begin{theorem}\label{thm:smoothness-bound}
Let $\mathcal{M}$ be a \yFPA\ which is $(\lambda, \mu)$-smooth.
Then $\text{\mcce-\poa} \le \max\set{1, 1 + \mu - \gamma}/{\lambda}$, where we need that the no-overbidding assumption holds if $\mu > \gamma$.
\end{theorem}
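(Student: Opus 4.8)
The plan is to bound the expected social welfare of an arbitrary CCE $\Bids$ from below by combining the smoothness deviation guarantee with the CCE inequality, and then to control the "error" term $\mu \sum_j \beta_j(\bids)$ using the fact that $\mathcal{M}$ is a $\gamma$-FPA (so the total payment is at least $\gamma \sum_j \beta_j(\bids)$) together with the identity $\SW(\bids) = \sum_i u_i(\bids) + \sum_i p_i(\bids)$. First I would fix a CCE $\Bids$ and a valuation profile $\vals$, and for each bidder $i$ let $\Bids_i'$ be the randomized deviation promised by $(\lambda,\mu)$-smoothness. Since $\Bids$ is a CCE, for every $i$ we have $\E_{\bids\sim\Bids}[u_i(\bids)] \ge \E_{\bids\sim\Bids}\E_{\bids_i'\sim\Bids_i'}[u_i(\bids_i',\bids_{-i})]$. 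Summing over $i$ and applying Definition~\ref{apx:def:weak-smooth-new} pointwise (then taking expectations over $\bids\sim\Bids$) gives
\[
\sum_{i\in N}\E_{\bids\sim\Bids}[u_i(\bids)] \;\ge\; \lambda\,\SW(\opt(\vals)) \;-\; \mu\,\E_{\bids\sim\Bids}\Bigl[\sum_{j=1}^k \beta_j(\bids)\Bigr].
\]

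Next I would rewrite the left-hand side using $\sum_i u_i(\bids) = \SW(\bids) - \sum_i p_i(\bids)$, so that
\[
\E[\SW(\Bids)] \;\ge\; \lambda\,\SW(\opt(\vals)) \;-\; \mu\,\E\Bigl[\sum_{j=1}^k \beta_j(\bids)\Bigr] \;+\; \E\Bigl[\sum_{i\in N} p_i(\bids)\Bigr].
\]
Now the two cases split on the sign of $\mu - \gamma$. If $\mu \le \gamma$, I would simply drop the (nonnegative, by NPT) payment term and bound $\mu \sum_j \beta_j(\bids) \le \gamma \sum_j \beta_j(\bids) \le \sum_i p_i(\bids)$ using the $\gamma$-FPA property, so the last two terms together are nonnegative, yielding $\E[\SW(\Bids)] \ge \lambda\,\SW(\opt(\vals))$, i.e. $\text{\mcce-\poa} \le 1/\lambda \le \max\{1,1+\mu-\gamma\}/\lambda$. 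If $\mu > \gamma$, I would instead write $\mu \sum_j \beta_j(\bids) = \gamma\sum_j\beta_j(\bids) + (\mu-\gamma)\sum_j\beta_j(\bids)$, cancel the first part against $\sum_i p_i(\bids)\ge\gamma\sum_j\beta_j(\bids)$ as before, and then use the no-overbidding assumption to bound $\sum_j \beta_j(\bids) \le \SW(\bids)$: each winning marginal bid $\beta_j(\bids)$ is some marginal bid $b_i(\ell)$ of a winner $i$, and summing the $x_i(\bids)$ winning marginal bids of $i$ gives at most $\sum_{\ell=1}^{x_i(\bids)} b_i(\ell) \le v_i(x_i(\bids))$ by \nob, so $\sum_j \beta_j(\bids) \le \sum_i v_i(x_i(\bids)) = \SW(\bids)$. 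This gives $\E[\SW(\Bids)] \ge \lambda\,\SW(\opt(\vals)) - (\mu-\gamma)\E[\SW(\Bids)]$, hence $\E[\SW(\Bids)] \ge \frac{\lambda}{1+\mu-\gamma}\SW(\opt(\vals))$, i.e. $\text{\mcce-\poa} \le (1+\mu-\gamma)/\lambda$.

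Combining the two cases yields $\text{\mcce-\poa} \le \max\{1, 1+\mu-\gamma\}/\lambda$, with the no-overbidding hypothesis used only in the second case, exactly as stated. The one point that needs a little care — and is the main thing to get right rather than a genuine obstacle — is the interchange of the per-bidder CCE inequality with the sum over $i$ and with the expectation over $\bids$: one applies the CCE condition for each $i$ separately (the deviation $\Bids_i'$ is a fixed distribution not depending on $\bids$, which is exactly what Definition~\ref{apx:def:weak-smooth-new} provides), then sums the $n$ inequalities, then invokes smoothness inside the outer expectation. It is also worth noting explicitly that $\lambda > 0$ is needed to divide through, and that in the case $\mu > \gamma$ one should check $1 + \mu - \gamma > 0$ so the final division is legitimate (it is, since $\mu \ge 0$ and $\gamma \le 1$).
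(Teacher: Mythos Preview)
Your proposal is correct and follows essentially the same route as the paper: apply the CCE condition to each smoothness deviation, sum, rewrite expected social welfare as utilities plus payments, use the $\gamma$-FPA lower bound on total payments to turn the $\mu\sum_j\beta_j(\bids)$ term into $(\gamma-\mu)\sum_j\beta_j(\bids)$, and then split into the two cases $\mu\le\gamma$ and $\mu>\gamma$, invoking no-overbidding only in the latter to bound $\sum_j\beta_j(\bids)\le \SW(\bids)$. Your added remarks justifying $\sum_j\beta_j(\bids)\le\SW(\bids)$ under \nob\ and checking $\lambda>0$, $1+\mu-\gamma>0$ are fine extra care but not substantive departures.
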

\begin{proof}
Fix a valuation profile $\vals$ and let $\Bids$ be a coarse correlated equilibrium. 
Consider some player $i$ and let $\Bids'_i$ be the (randomized) deviation of bidder $i$ as given by the smoothness definition. Exploiting the coarse correlated equilibrium condition for $i$, we have for every (deterministic) bid vector $\bids'_i$ that $\pe[\bids \sim \Bids]{u_i(\bids)} \ge \pe[\bids \sim \Bids]{u_i(\bids'_i, \bids_{-i})}$ and thus also 
\begin{equation}\label{eq:deviateee}
\pe[\bids \sim \Bids]{u_i(\bids)} 
\ge \pe[\bids \sim \Bids]{\pe[\bids'_i \sim \Bids'_i]{u_i(\bids'_i, \bids_{-i})}}.
\end{equation}
Using this, we obtain 
\begin{align}
\pe{\SW(\Bids)}
& = \sum_{i \in N} \pel[\bids \sim \Bids]{u_i(\bids)  + p_i(\bids)}  \\
&\ge \sum_{i \in N} \pel[\bids \sim \Bids]{\pe[\bids'_i \sim \Bids'_i]{u_i(\bids'_i, \bids_{-i})} + p_i(\bids)} \notag \\
 & \ge \lambda \SW(\opt(\vals)) + ( \gamma - \mu) \pel[\bids \sim \Bids]{\sum_{j=1}^k \beta_j(\bids)} \label{eq:smooth-old},
\end{align}
where the first inequality follows from \eqref{eq:deviateee} and the second inequality holds because of the smoothness definition and because $\mech$ is first-price $\gamma$-approximate. 

We distinguish two cases: 

\underline{Case 1:} $\mu \le \gamma$. 
Using \eqref{eq:smooth-old}, we obtain 
$\pe{\SW(\Bids)} \ge  \lambda \SW(\opt(\vals))$
and thus 
$\poa(\vals) \le 1/\lambda$. 

\underline{Case 2:} $\mu > \gamma$.
Exploiting that the no-overbidding assumption holds in this case, we get that $\sum_{j=1}^k \beta_j(\bids) \le \sum_{i \in N} v_i(x_i(\bids))$. 
Using \eqref{eq:smooth-old}, we obtain
$
\pe{\SW(\Bids)}
\ge \lambda \SW(\opt(\vals)) + (\gamma-\mu) \pe{\SW(\Bids)}
$.
Rearranging terms yields
$
\poa(\vals) \le  (1+\mu-\gamma)/\lambda.
$
Combining both cases proves the claim. 
\qed
\end{proof}

We will use the above smoothness definition in combination with the following lemma, which we import from \cite{deKeijzer2013} (adapted to our setting). 
\begin{lemma}[Lemma 3 in  \cite{deKeijzer2013}]
\label{keylemma}
Let $\mech$ be a mechanism that is first-price dominated and let $\alpha > 0$ be fixed arbitrarily. Then for every valuation profile $\vals$ and for every bidder $i$ there exists a randomized deviation $\Bids'_i$ such that for every bidding profile $\bids$ we have 
\begin{equation}\label{eq:key}
\pe[\bids'_{i}\sim \Bids'_i]{u_i(\bids_i', \bids_{-i})} 
\geq 
\alpha \left(1-\frac{1}{e^{1/\alpha}}\right) v_i(\opt_i(\vals))  
- \alpha\sum_{j=1}^{\opt_i(\vals)} \beta_j(\bids)
\end{equation}
\end{lemma}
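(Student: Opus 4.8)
The plan is to construct, for each bidder $i$, a randomized deviation $\Bids'_i$ that targets bidder $i$'s allocation $\opt_i(\vals)$ in the social optimum, and to bound $i$'s utility from below via a randomized threshold argument. First I would set $q := \opt_i(\vals)$ and, for notational brevity, write $v_i(q)$ for the valuation $i$ attaches to receiving $q$ items in the optimum. The deviation will bid a common marginal value $t$ on all $q$ marginals (and $0$ on the remaining $k-q$), where $t$ is drawn from a carefully chosen distribution on $[0, v_i(q)/q]$ — essentially the distribution with density proportional to $1/t$ truncated appropriately, which is the standard device in first-price smoothness arguments (cf.~\cite{syrgkanis2013composable,deKeijzer2013}). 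The key point is that since $\mech$ is first-price dominated, if $i$ wins $\ell$ items by bidding $t$ on each of $q$ marginals, then $i$ pays at most $\ell t \le q t$, so $i$'s utility is at least $v_i(\ell) - q t$; and by submodularity $v_i(\ell) \ge (\ell/q) v_i(q)$ when $\ell \le q$.

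Next I would fix an arbitrary bidding profile $\bids_{-i}$ of the others and analyze, deterministically in $t$, how many items $i$ captures. Ordering the $q$ highest marginal bids among the others' bids as $\beta_1(\bids) \le \dots \le \beta_q(\bids)$ (more precisely, the $q$ lowest of the relevant winning bids in the ambient profile), bidder $i$ bidding $t$ on $q$ marginals displaces all marginals of the others that are below $t$; so $i$ wins at least $q - |\{j : \beta_j(\bids) \ge t\}|$ of the $q$ contested slots. Combining this with the utility bound from the previous paragraph and then taking the expectation over the random threshold $t$, the $1/t$-shaped density makes the contribution of each ``blocking'' bid $\beta_j(\bids)$ collapse into a clean linear term $-\alpha \beta_j(\bids)$, while the ``gain'' term integrates to $\alpha(1 - e^{-1/\alpha}) v_i(q)$. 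This is exactly the shape of \eqref{eq:key}. The parameter $\alpha$ enters as a free scaling of the support of the threshold distribution (support $[0, v_i(q)/(\alpha q)]$ rather than $[0, v_i(q)/q]$, say), which is why it appears as an arbitrary positive constant.

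The main obstacle — and the place where care is needed — is the bookkeeping of which marginal bids of the opponents actually obstruct bidder $i$, and ensuring the bound is stated uniformly in $\bids$ (including the component $\bids_i$, which is irrelevant since $i$ deviates away from it). One has to be careful that the relevant obstructing bids are the $q$ \emph{lowest winning} marginal bids $\beta_1(\bids), \dots, \beta_q(\bids)$ of the ambient profile and not, say, $i$'s own bids or losing bids; a short monotonicity argument shows that replacing $i$'s bid by $t$ on $q$ marginals can only fail to secure a slot against a marginal that is itself at least $t$ and among the top $k$, and there are at most $q$ such that matter for the first $q$ slots. A secondary subtlety is handling ties at the threshold and the boundary value $t = v_i(q)/(\alpha q)$, but these are measure-zero events under the continuous threshold distribution and contribute nothing. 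Since the lemma is quoted verbatim from \cite[Lemma~3]{deKeijzer2013}, I would at this point simply cite that source for the detailed computation rather than reproduce the integral manipulation, noting only that the adaptation to our $(\lambda,\mu)$-smoothness framework is purely notational (their per-player Bayesian statement specializes to ours in the complete-information setting).
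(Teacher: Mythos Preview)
The paper does not prove this lemma at all; it simply imports it verbatim from \cite[Lemma~3]{deKeijzer2013} and uses it as a black box. Your plan to sketch the standard randomized-threshold deviation and then defer to that source is therefore already more than the paper itself does, and the overall structure you outline --- bid a common marginal $t$ on $q=\opt_i(\vals)$ slots, use first-price domination to bound the payment by $qt$, use submodularity for $v_i(\ell)\ge(\ell/q)v_i(q)$, and integrate over the random threshold --- is correct.

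One small inaccuracy worth flagging: the threshold density is not ``proportional to $1/t$'' (which is not even integrable near $0$) but rather proportional to $1/(v_i(q)/q - t)$ on the interval $[0,(1-e^{-1/\alpha})\,v_i(q)/q]$; the parameter $\alpha$ enters via this upper endpoint and the normalizing constant, not as a $1/\alpha$-scaling of the support as you suggest. With that density, $\int_{\beta}^{(1-e^{-1/\alpha})v_i(q)/q}(v_i(q)/q - t)\cdot\frac{\alpha}{v_i(q)/q - t}\,dt = \alpha(1-e^{-1/\alpha})v_i(q)/q - \alpha\beta$ for each obstructing marginal $\beta$, which is exactly where the two terms in \eqref{eq:key} come from. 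Since you intend to cite \cite{deKeijzer2013} for the computation anyway, this does not affect the plan.
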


We can now prove Theorem~\ref{thm:poa-bound}.

\begin{theorem}\label{thm:poa-bound}
Let $\alpha > 0$ be fixed arbitrarily. The coarse correlated price of anarchy of any \yFPA\ is 
\begin{equation}\label{eq:start-from-here}
    \textit{\mcce-\poa} \le \frac{\max\set{1, 1 + \alpha - \gamma}}{\alpha (1- e^{-1/\alpha})},
\end{equation}
where we need that the no-overbidding assumption holds if $\alpha > \gamma$.
\end{theorem}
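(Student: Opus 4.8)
The plan is to show that every $\yFPA$ is $(\lambda,\mu)$-smooth in the sense of Definition~\ref{apx:def:weak-smooth-new} with $\lambda = \alpha\bigl(1-e^{-1/\alpha}\bigr)$ and $\mu=\alpha$, and then to read off the bound from Theorem~\ref{thm:smoothness-bound}. Recall that, by our standing convention, every mechanism satisfies individual rationality and is therefore first-price dominated, so Lemma~\ref{keylemma} applies verbatim to any $\yFPA$. Fix $\alpha>0$ and a valuation profile $\vals$. For each bidder $i$, let $\Bids'_i$ be the randomized deviation provided by Lemma~\ref{keylemma}, so that for every bidding profile $\bids$,
\[
\pel[\bids'_i \sim \Bids'_i]{u_i(\bids'_i, \bids_{-i})} \ge \alpha\left(1 - \tfrac{1}{e^{1/\alpha}}\right) v_i(\opt_i(\vals)) - \alpha \sum_{j=1}^{\opt_i(\vals)} \beta_j(\bids).
\]

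The next step is to sum this over all $i\in N$. The left-hand side is precisely the quantity appearing in Definition~\ref{apx:def:weak-smooth-new}. On the right, the first term aggregates to $\alpha(1-e^{-1/\alpha})\sum_{i\in N} v_i(\opt_i(\vals)) = \alpha(1-e^{-1/\alpha})\,\SW(\opt(\vals))$. For the second term one uses that $\sum_{i\in N}\opt_i(\vals)\le k$ together with the ordering $\beta_1(\bids)\le\cdots\le\beta_k(\bids)$ of the winning bids to conclude $\sum_{i\in N}\sum_{j=1}^{\opt_i(\vals)}\beta_j(\bids) \le \sum_{j=1}^{k}\beta_j(\bids)$. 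Concretely, writing the left-hand side as $\sum_{j=1}^{k} c_j\,\beta_j(\bids)$ with $c_j = |\{i\in N: \opt_i(\vals)\ge j\}|$, the sequence $(c_j)_j$ is non-increasing with $\sum_j c_j = \sum_i \opt_i(\vals)\le k$; since $(\beta_j(\bids))_j$ is non-decreasing, an Abel-summation (majorization) comparison against the indicator of $\{1,\dots,\sum_i\opt_i(\vals)\}$ gives the inequality. Hence $\yFPA$ is $\bigl(\alpha(1-e^{-1/\alpha}),\,\alpha\bigr)$-smooth.

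Finally, substitute $\lambda=\alpha(1-e^{-1/\alpha})$ and $\mu=\alpha$ into Theorem~\ref{thm:smoothness-bound}. This yields $\text{\mcce-\poa} \le \max\{1,\,1+\alpha-\gamma\}/\bigl(\alpha(1-e^{-1/\alpha})\bigr)$, exactly \eqref{eq:start-from-here}; moreover the no-overbidding hypothesis invoked in Theorem~\ref{thm:smoothness-bound} is required precisely when $\mu>\gamma$, i.e.\ when $\alpha>\gamma$, which matches the claimed side condition.

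I expect the only step needing real care to be the aggregation of the per-bidder guarantees of Lemma~\ref{keylemma} into the single global smoothness inequality — that is, verifying $\sum_{i\in N}\sum_{j=1}^{\opt_i(\vals)}\beta_j(\bids)\le\sum_{j=1}^{k}\beta_j(\bids)$ (and checking that the optimal allocation indeed places at most $k$ items, so that the winning bids indexed in Lemma~\ref{keylemma} exist). Everything else is a direct substitution into results already established: the delicate analytic content, namely the $(1-e^{-1/\alpha})$ factor, is encapsulated in Lemma~\ref{keylemma}, which we treat as a black box, and the case distinction on $\mu$ versus $\gamma$ is handled inside Theorem~\ref{thm:smoothness-bound}.
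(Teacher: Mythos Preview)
Your proof is correct and follows essentially the same approach as the paper: sum the per-bidder guarantee of Lemma~\ref{keylemma}, use $\sum_{i}\sum_{j=1}^{\opt_i(\vals)}\beta_j(\bids)\le\sum_{j=1}^{k}\beta_j(\bids)$ to conclude $(\alpha(1-e^{-1/\alpha}),\alpha)$-smoothness, and apply Theorem~\ref{thm:smoothness-bound}. The paper's proof is terser (it simply asserts the aggregation inequality), whereas you additionally justify why Lemma~\ref{keylemma} applies and sketch a majorization argument for the inequality; both are fine elaborations of the same argument.
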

\begin{proof}
We use both Lemma~\ref{keylemma} and Theorem~\ref{thm:smoothness-bound}.

Note that $\sum_{i \in N} \sum_{j=1}^{\opt_i(\vals)} \beta_j(\bids) \le 
\sum_{j=1}^{k} \beta_j(\bids)$. Hence, by summing inequality \eqref{eq:key} over all players, we obtain that the mechanism is $(\alpha (1- e^{-1/\alpha}), \alpha)$-smooth. 
The claimed bound now follows from Theorem~\ref{thm:smoothness-bound}.
\qed
\end{proof}

\section{Overbidding} 
\label{sec:overbidding}

We derive a tight bound on the coarse correlated price of anarchy of \yFPA\ for $\gamma > 0$ in the multi-unit auction setting when bidders can overbid. Interestingly, tightness is already achieved by a  single-item \yDPA. It is known that the price of anarchy is unbounded for $\spa$ ($\gamma = 0$). The bound is displayed in Figure \ref{fig:bound-overview}(a). 

\begin{theorem} \label{thm:OBpoaMUA}
Consider a multi-unit \yFPA\ and suppose that bidders can overbid. For $\gamma \in (0, 1]$, the coarse correlated price of anarchy is 
    $
        \text{\mcce-\poa} \le \frac{1}{\gamma (1- e^{-1/\gamma})}.
    $
Further, this bound is tight, even for single-item \yDPA. 
\end{theorem}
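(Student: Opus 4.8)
The upper bound is essentially already in hand: by Theorem~\ref{thm:poa-bound}, setting $\alpha = \gamma$ gives $\text{\mcce-\poa} \le \max\set{1, 1}/(\gamma(1-e^{-1/\gamma})) = 1/(\gamma(1-e^{-1/\gamma}))$, and since we chose $\alpha = \gamma$ the no-overbidding side condition is not needed, so the bound holds even when bidders can overbid. So the substance of the theorem is the matching lower bound, which I would establish by exhibiting, for each $\gamma \in (0,1]$, a family of single-item \yDPA\ instances whose worst-case equilibrium ratio approaches $1/(\gamma(1-e^{-1/\gamma}))$. Since a single-item instance is a special case of the multi-unit setting and every \yDPA\ is a \yFPA, this is enough.

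For the construction I would take $n$ bidders, one ``big'' bidder with value $1$ (the social optimum assigns the item to them, so $\SW(\opt) = 1$) and $n-1$ ``small'' bidders each with value $\varepsilon$ or, more usefully, a carefully chosen geometric/continuous ladder of values so that a mixed or coarse correlated equilibrium can spread the big bidder's bid across a range. The key idea, mirroring the classical first-price lower-bound constructions (Syrgkanis--Tardos, and the $e/(e-1)$ bound of Feldman et al.\ recovered here at $\gamma = 1$), is to let the big bidder randomize their bid over an interval $[0, \bar b]$ according to a distribution $F$ chosen so that (i) the big bidder is indifferent across all bids in the support — this pins down $F$ via the \yDPA\ payment $p^\gamma(\bids) = \gamma b + (1-\gamma)\,\bar p(\bids)$ — and (ii) with some probability the big bidder ``loses'' (bids below the top small bid) so that welfare is only $\varepsilon \to 0$ on that event. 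The indifference condition for a first-price-type payment yields an exponential-shaped CDF, and the probability of the low-welfare event works out to $1 - \gamma(1-e^{-1/\gamma})$ in the limit, giving expected welfare $\to \gamma(1-e^{-1/\gamma})$ against an optimum of $1$.

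Concretely, the cleanest route is probably a two-bidder-type continuum or an $n$-bidder discretization: let the small bidders deterministically occupy values densely in $[0,1)$ (or use a single competing bidder drawn from a suitable distribution), let the big bidder play a mixed strategy with CDF $F(b)$, and write the expected utility of the big bidder when bidding $b$ as $\big(1 - (\text{something involving } F)\big)\big(1 - \gamma b - (1-\gamma)\cdot(\text{expected highest losing bid})\big)$. Impose that this is constant in $b$ over the support $[0,\bar b]$ with $\bar b$ chosen so that $F(\bar b) = 1$; solving the resulting ODE/functional equation gives $F$ explicitly, and then I would verify the CCE (in fact MNE) conditions — the big bidder is indifferent by design, and each small bidder is best responding because raising their bid above their value is dominated (here overbidding is allowed, but it still isn't profitable) and lowering it loses a profitable win only with negligible probability. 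Finally compute $\pe{\SW(\Bids)} = \P[\text{big wins}] \cdot 1 + o(1)$ and check that $\P[\text{big wins}] \to \gamma(1-e^{-1/\gamma})$ as the discretization refines.

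**Main obstacle.** The delicate point is getting the constant exactly right: one must choose the competing-bid distribution and the support endpoint $\bar b$ so that the big bidder's indifference equation has a solution with $F$ a genuine CDF ($F$ nondecreasing, $F(\bar b)=1$, no atoms where they would break best-response) \emph{and} so that the resulting win probability is precisely $\gamma(1-e^{-1/\gamma})$ rather than something close to it. The $(1-\gamma)\bar p(\bids)$ term couples the big bidder's payment to the realized losing bid, which complicates the indifference condition compared to a pure first-price auction; handling this cleanly — likely by taking the small bidders' values to form a continuum so that $\bar p$ is a deterministic function of the big bidder's bid on the event that the big bidder wins — is the technical heart of the argument. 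I would also need to double-check that allowing overbidding does not let some bidder profitably deviate in a way that destroys the equilibrium; since the constructed profile has everyone either indifferent or strictly preferring their action, this should be routine but must be stated.
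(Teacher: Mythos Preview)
Your upper-bound argument is fine and matches the paper exactly: plug $\alpha=\gamma$ into Theorem~\ref{thm:poa-bound}.

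The lower-bound plan, however, has a real gap. You are trying to build a \emph{mixed} Nash equilibrium in which the big bidder randomizes and the small bidders essentially stay at or below their (tiny) values. That cannot produce any welfare loss: if every small bidder bids at most $\varepsilon$, the big bidder's unique best response is to bid just above $\varepsilon$ and win with probability~$1$, so no nontrivial mixing survives. If instead you give the small bidders values dense in $[0,1)$, then a small bidder winning is \emph{not} a welfare loss, so the ratio stays close to~$1$. And if you try to make a value-$0$ bidder mix independently over positive bids, they earn strictly negative expected utility whenever $\gamma>0$ (they pay at least $\gamma b>0$ when they win), so bidding~$0$ strictly dominates and again no MNE of the desired shape exists. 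In short, an independent-mixing construction cannot hit the target constant.

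The paper's construction is conceptually different and uses the two ingredients you explicitly ruled out: overbidding by the low-value bidder, and genuine correlation. There are just two bidders, with $v_1=v>0$ and $v_2=0$. A single random variable $t$ with CDF $F(t)=(1-\gamma)+\tfrac{v}{v-t}\gamma e^{-1/\gamma}$ on $[0,(1-e^{-1/\gamma})v]$ (atom at~$0$) is drawn, and \emph{both} bidders bid $t$. Ties at $t=0$ go to bidder~$2$, ties at $t>0$ to bidder~$1$. Thus bidder~$2$ overbids whenever $t>0$ but never wins there, so never pays; their overbid serves solely to drive up bidder~$1$'s second-price component. Bidder~$2$'s CCE constraint is trivial (any positive deviation yields negative utility), and bidder~$1$'s indifference is checked by a direct computation. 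The welfare is then $v\cdot\P[t>0]=v\gamma(1-e^{-1/\gamma})$, giving the ratio exactly. The correlation is essential precisely because it lets the zero-value bidder overbid without ever bearing the cost---something no product distribution can arrange.
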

\pagebreak 
\begin{proof}
\label{appendix:overbidding} 
\item 
\myparagraph{Upper bound:}
This bound is based on Theorem \ref{thm:poa-bound}. Since bidders can overbid in this setting, we restrict to the part of equation (\ref{eq:start-from-here}) that does not require the no-overbidding assumption, namely $\alpha \le \gamma$, with 
\[
\textit{\mcce-\poa} \leq \frac{1}{\alpha \left(1- e^{-1/\alpha}\right)}.
\]
To minimize this upper bound for any given $\gamma \in [0,1]$, consider its derivative with respect to $\alpha$, 
\begin{align*}
    - \frac{1}{\alpha^2 (1- e^{-1/\alpha})^2} \left(1- e^{-1/\alpha} - \alpha \frac{1}{\alpha^2} e^{-1/\alpha}\right) =  - \frac{1- (1+\tfrac{1}{\alpha}) e^{-1/\alpha}}{\alpha^2 (1- e^{-1/\alpha})^2}.  
\end{align*}
As 
$(1+1/\alpha) e^{-1/\alpha} < 1 $ for all $\alpha > 0$, the derivative is negative for all $\alpha > 0$. Therefore, the bound is minimized by maximizing $\alpha \in (0,\gamma]$. Substituting $\alpha = \gamma$ for any $\gamma \in (0,1]$ yields the upper bound. 

\item 
\myparagraph{Tight lower bound:} This bound can be proven to be tight for all $\gamma \in (0,1]$ by generalizing an example used by Syrgkanis \cite{phdthesisSyrgkanis} to provide a lower bound on the \mcce-\poa\ for the first-price single-item auction: 
Consider a single-item auction with two bidders and using the $\gamma$-hybrid pricing rule as defined above. We have $v_1 = v$ for some $v>0$ and $v_2=0$. If both bidders bid 0, the tie is broken in favor of bidder 2, whereas bidder 1 wins the auction if bidders tie with any positive bid. We construct a coarse correlated equilibrium for any $\gamma \in (0,1]$, with a welfare loss that matches the upper bound.

Let $t$ be a random variable with support $[0,(1-e^{-1/\gamma})v]$ whose cumulative distribution function (CDF) $F$ and density function $f$ {(which is well-defined for any $t \in (0,(1-e^{-1/\gamma})v]$)}, respectively, are given as 
\[
F(t)=(1-\gamma)+\frac{v}{v-t}\gamma e^{-1/\gamma}
\qquad
\text{and}
\qquad
f(t)=\frac{v}{(v-t)^2}\gamma e^{-1/\gamma}.
\]
Note that $F$ has an atom at $0$ with mass $(1-\gamma)+\gamma e^{-1/\gamma}$.

Consider a bidding profile $\Bids = (t,t)$.
Since ties are broken in favor of bidder 2 for $t=0$, they win with probability $(1-\gamma)+\gamma e^{-1/\gamma}$, which yields
\[
\frac{\SW(\opt(\vals))}{\E[\SW(\Bids)]} = \frac{v}{(1-F(0))v} = \frac{1}{1-(1-\gamma)-\gamma e^{-1/\gamma}}=\frac{1}{\gamma \left(1- e^{-1/\gamma}\right)}.
\]

It remains to show that $\Bids$ is a $\mcce$. For bidder 2, this is quite obvious, since they either win by bidding 0, or lose if $t>0$. Given any positive bid from bidder 1, the payment would be strictly greater than $v_2=0$, meaning bidder 2 could never profitably deviate. 

For bidder 1, we show that for any $\gamma \in (0,1]$, any deviation to a fixed bid $b_1=b$ with $b \in (0,(1-e^{-1/\gamma})v]$ leads to an expected utility of at most $ \E_{\bids \sim \Bids}[u_1(\bids)]$. To start with $\Bids$ itself, note that bidder 1 wins whenever $t>0$, and since both bidders bid $t$, we have a payment of $\gamma t + (1-\gamma)t = t$. Recalling that $v_1=v$, we get
\begin{align*}
    \E_{\bids \sim \Bids}[u_1(\bids)] &= \int_{0}^{(1-e^{-1/\gamma})v } (v - t)f(t) dt \\ 
    &= \int_{0}^{(1-e^{-1/\gamma})v } \frac{v}{v-t}\gamma e^{-1/\gamma}  dt \\
    &= v \gamma e^{-1/\gamma} \left[ -\ln (v-t)\right]_{0}^{(1-e^{-1/\gamma})v } \\ 
    &= v \gamma e^{-1/\gamma} \left( \ln (v) - \ln (e^{-1/\gamma}v)  \right) \\
    &= v \gamma e^{-1/\gamma}  \tfrac{1}{\gamma} = v e^{-1/\gamma}.
\end{align*}
If bidder 1 deviates to $b$, they win the item if $b\geq t$, and for each $t \in (0,b]$ bidder 1 pays $\gamma b + (1-\gamma) t$. Hence, the expected utility of bidder 1 becomes
\[
    \E_{t \sim F(t)}[u_1(b,t)] = \int_{0}^{b} (v - \gamma b - (1-\gamma) t) f(t) dt \\
\]
To facilitate the calculations, note that
\begin{align*}
    \int_{0}^{b} t f(t) dt &= \gamma v e^{-1/\gamma}  \int_{0}^{b} \frac{t}{(v-t)^2} dt \\ &= \gamma v e^{-1/\gamma} \left[ \frac{v}{v-t} + \ln(v-t) \right]_{0}^{b} \\
    &=  \left(\frac{v}{v-b} - 1 \right) \gamma v e^{-1/\gamma} + \ln \left(\frac{v-b}{v} \right) \gamma v e^{-1/\gamma} 
\end{align*}
and 
\[
    \int_{0}^{b} f(t) dt = F(b)-F(0) = \left( \frac{v}{v-b} -1\right) \gamma e^{-1/\gamma}.
\]
Using this, we get
\begin{align*}
    \E_{t \sim F(t)}[u_1(b,t)]  &= (v - \gamma b )\int_{0}^{b} f(t) dt - (1-\gamma) \int_{0}^{b} t f(t) dt \\
    &= (v - \gamma b -(1-\gamma)v) \left( \frac{v}{v-b} -1\right) \gamma e^{-1/\gamma} \\
    & \quad- (1-\gamma) \ln \left(\frac{v-b}{v} \right) \gamma v  e^{-1/\gamma} \\
    &= \gamma (v - b) \left( \frac{v}{v-b} -1\right) \gamma e^{-1/\gamma} \\
    &\quad- (1-\gamma) \ln \left(\frac{v-b}{v} \right) \gamma v e^{-1/\gamma} \\
    &= b \gamma^2 e^{-1/\gamma} - (1-\gamma) \ln \left(\frac{v-b}{v} \right) \gamma v e^{-1/\gamma}.
\end{align*}
Since $0<b<v$, note that $- \ln \left(\frac{v-b}{v} \right)$ is increasing in $b$. Since $\gamma \in (0,1]$, this implies the entire function above is increasing in $b$. Hence, it can be upper bounded by substituting the upper bound of the support: $b=(1-e^{-1/\gamma})v$. This yields
\begin{align*}
    \E_{t \sim F(t)}[u_1(b,t)] &\le (1-e^{-1/\gamma}) v \gamma^2 e^{-1/\gamma}  - (1-\gamma) \ln \left( e^{-1/\gamma} \right) \gamma v e^{-1/\gamma}  \\
    &= \left( (1-e^{-1/\gamma}) \gamma^2 + (1-\gamma) \right) v e^{-1/\gamma}\\
    &= \left( (1-e^{-1/\gamma}) \gamma^2 + (1-\gamma) \right)  \E_{\bids \sim \Bids}[u_1(\bids)].
\end{align*}
Therefore, $\E_{t \sim F(t)}[u_1(b,t)] \leq \E_{\bids \sim \Bids}[u_1(\bids)]$ for any $b \in (0,(1-e^{-1/\gamma})v]$ if
\begin{align*}
    (1-e^{-1/\gamma}) \gamma^2 + (1-\gamma)  \leq 1  \quad \iff \quad 
    \gamma (1-e^{-1/\gamma}) \leq 1
\end{align*}
which holds for any $\gamma \in (0,1]$ as required. This shows that bidder 1 does not have any profitable deviation in the interval $(0,(1-e^{-1/\gamma})v]$. 
Finally, since $b =(1-e^{-1/\gamma})v$ already gives $F(b)=1$, any higher bid will only lead to a (strictly) higher payment (since $\gamma>0$), thereby being (strictly) worse than bidding $b =(1-e^{-1/\gamma})v$. Hence, deviations to a bid higher than this upper bound of the support of $F(t)$ need not be considered. 

Concluding, $\Bids$ is a $\mcce$ for which the ratio of the social welfare of the social optimum and the expected social welfare of $\Bids$ exactly coincides with the upper bound derived in the previous section.  \qed
\end{proof}

\section{No Overbidding}
\label{sec:no-overbidding}

\subsection{Multi-Unit Auction}

In the previous section, we have completely settled the coarse correlated price of anarchy of \yFPA\ when overbidding is allowed. We see that especially when $\gamma$ gets small this has an extremely negative effect on the price of anarchy. In this section, we will investigate how these bounds improve under the no-overbidding assumption (\nob\ as defined above). 
It is a standard assumption to make and we will see that it leads to a significant improvement of the price of anarchy bounds, most notably for lower values of $\gamma$. 

We start with bounding the pure price of anarchy.

\subsubsection{Pure Price of Anarchy.}

The pure price of anarchy of \yDPA\ without overbidding has been analyzed before for $\gamma = 0$ and $\gamma = 1$: Birmpas et al. \cite{BMTT19} show that the \pne-\poa\ is $2.1885$ for the second-price multi-unit auction ($\gamma = 0$), while de Keijzer et al. \cite{deKeijzer2013} show that the \pne-\poa\ is 1 for the first-price multi-unit auction ($\gamma = 1$).
Interestingly, we do not find a smooth interpolation between these two boundary points when analyzing the \pne-\poa\ for the range $\gamma \in [0,1]$.
As it turns out, for \yDPA\ the \pne-\poa\ stays at 1 almost over the entire range, the only exception being at $\gamma = 0$ where it is $2.1885$ by the result of Birmpas et al. \cite{BMTT19}.

\begin{theorem}
Pure Nash equilibria of \yDPA\ without overbidding are always efficient, i.e., {\pne-\poa\ $ = 1$} for all $\gamma \in (0,1)$.
\end{theorem}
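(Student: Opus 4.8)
The plan is to show that every pure Nash equilibrium $\bids$ induces a welfare-maximal allocation, which immediately gives $\text{\pne-\poa}=1$. Write $\alloc=\alloc(\bids)$, abbreviate $x_i:=x_i(\bids)$, and let $\bar p:=\bar p(\bids)$ be the highest losing marginal bid; recall that exactly $k$ items are assigned (so $\sum_i x_i=k$) and that every winning marginal bid is at least $\bar p$. I would derive, from three carefully chosen unilateral deviations, that $(\alloc,\bar p)$ behaves like a market-clearing outcome at the uniform price $\bar p$, and then conclude by a first-welfare-theorem-style argument for the (concave, single-parameter) valuations. \emph{Step 1 (no winner wants to drop a unit).} For a winner $i$, the deviation that sets $b_i(x_i):=0$ leaves $i$ with $x_i-1$ items and can only weakly decrease the highest losing bid; using $b_i(x_i)\ge\bar p$ one checks that $i$'s payment decreases by at least $\bar p$, so the equilibrium condition yields $v_i(x_i)-v_i(x_i-1)\ge\bar p$, and by submodularity $v_i(q)\ge q\bar p$ for all $q\le x_i$.

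\emph{Step 2 (winners do not shade their marginal winning bid --- this is where $\gamma>0$ is essential).} Suppose some winner $i$ has $b_i(x_i)>\bar p$. I claim that lowering $b_i(x_i)$ to $\bar p+\epsilon$ for small $\epsilon>0$ keeps $i$ among the top $k$ bids and leaves the highest losing bid equal to $\bar p$ (the changed bid still exceeds every losing bid and lies strictly below all of $i$'s other winning bids); hence $i$ still wins $x_i$ items, but the first-price component of $p^\gamma_i$ drops by $\gamma(b_i(x_i)-\bar p-\epsilon)>0$, contradicting equilibrium. Therefore $b_i(x_i)=\bar p$ for every winner. In particular the smallest winning marginal bid equals $\bar p$, so for every bidder $i$ with $x_i<k$ some other winner's marginal bid equals $\bar p$, which means $i$ can win one extra item by bidding $\bar p+\epsilon$ on $x_i+1$ units.

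\emph{Step 3 (nobody wants to add a unit).} Fix $i$ with $x_i<k$. If the bid $(\bar p+\epsilon,\dots,\bar p+\epsilon)$ on $x_i+1$ units respects no-overbidding, then by Step 2 it wins $x_i+1$ items at a cost tending to $(x_i+1)\bar p$; combined with $p^\gamma_i(\bids)\ge x_i\bar p$ the equilibrium condition gives $v_i(x_i+1)-v_i(x_i)\le\bar p$. If instead that bid violates no-overbidding, then $v_i(q)\le q\bar p$ for some $q\le x_i+1$, and together with Step 1 (which gives $v_i(q)\ge q\bar p$ for $q\le x_i$) and submodularity this forces $v_i(x_i+1)-v_i(x_i)\le\bar p$ directly; a non-winner is the special case $x_i=0$. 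Steps 1 and 3, together with non-increasing marginals, then imply $v_i(y)-v_i(x_i)\le(y-x_i)\bar p$ for every bidder $i$ and every $y\in\{0,\dots,k\}$. Setting $y=\opt_i(\vals)$, summing over $i$, and using $\sum_i\opt_i(\vals)=k=\sum_i x_i$ yields $\SW(\opt(\vals))-\SW(\bids)\le\bar p(k-k)=0$, so $\bids$ is efficient.

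I expect Step 2 to be the crux: it is the only place where $\gamma>0$ enters, and it is precisely what eliminates the demand-reduction inefficiency that persists at $\gamma=0$ (there a winner may keep an inflated marginal bid, which makes the ``add a unit'' deviation of Step 3 prohibitively expensive). The main technical care needed there --- and, to a lesser extent, in Step 3 --- is a careful treatment of ties under the fixed consistent tie-breaking rule when verifying that the allocation and the highest losing bid are unchanged, as well as the degenerate case $\bar p=0$; Steps 1 and the final aggregation are routine.
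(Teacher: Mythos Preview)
Your argument is correct and follows essentially the same route as the paper, which defers to the de Keijzer--Markakis--Sch\"afer--Telelis lemma (all winning bids equal the highest losing bid $d$; marginal valuations below and above $x_i$ are bounded by $d$ from below and above, respectively) and observes that its proof carries over for any $\gamma>0$ because the first-price component gives winners a strict incentive to shade down --- exactly your Step~2. The only cosmetic differences are that you establish the weaker fact $b_i(x_i)=\bar p$ (which already suffices) and use submodularity together with single-unit deviations rather than the direct $\ell$-unit deviations of the original lemma.
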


This theorem follows from a minor adaption of the result of de Keijzer et al. \cite{deKeijzer2013}, who show that pure Nash equilibria are always efficient for $\gamma=1$. Intuitively, the same result goes through for $\gamma > 0$, because when considering $\gamma > 0$ there is a first price component and thus a player always has an incentive to lower their winning bid to the highest losing bid as that would increase their utility. For more details we refer to Appendix \ref{appPNE}.

\begin{toappendix}
\label{appPNE}
De Keijzer et al. \cite{deKeijzer2013} prove the following theorem.
\begin{theorem} \label{thmKeijzer}
\cite{deKeijzer2013} Pure Nash equilibria of the Discriminatory Auction 
are always efficient, even for bidders with arbitrary valuation functions.
\end{theorem}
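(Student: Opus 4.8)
The plan is to show that any pure Nash equilibrium of the discriminatory (first-price) auction induces a single \emph{anonymous market-clearing price}, and then to invoke the first welfare theorem. Phrasing efficiency this way is exactly what makes the ``arbitrary valuations'' clause harmless: the welfare theorem needs only individual demand-optimality together with market clearing, and never any structural property such as submodularity.

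First I would establish the clearing-price structure. Let $\bids$ be a PNE with allocation $\alloc(\bids)$ and write $p = \bar{p}(\bids)$ for the highest losing bid. The key observation is the \emph{shading deviation}: any winner $i$ whose lowest winning marginal bid lies strictly above $p$ can lower it towards $p$ while still retaining all $x_i(\bids)$ items, strictly decreasing their first-price payment. Since this holds for every winning marginal of every winner, the equilibrium forces (in the limit) all $k$ winning marginal bids down to $p$. Two facts then follow: every winning bid is at least $p$, so the current payment of winner $i$ is at least $x_i(\bids)\,p$ and hence $u_i(\bids) \le v_i(x_i(\bids)) - x_i(\bids)\,p$; and the single number $p$ acts as a uniform per-unit price faced by all bidders.

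Second, I would prove the demand-optimality inequality $u_i(\bids) \ge v_i(q) - q\,p$ for every bidder $i$ and every target quantity $q$, using deviations. For $q \le x_i(\bids)$, a downward deviation keeping $q$ marginals just above $p$ wins $q$ items at total cost tending to $q\,p$. For $q > x_i(\bids)$, an upward deviation bidding $q$ marginals just above the other (shaded) winning bids displaces the required number of competitors, again at cost tending to $q\,p$; here it is essential that all competing winning bids have themselves been shaded to $p$, which is precisely the property from the first step. Combining with $u_i(\bids) \le v_i(x_i(\bids)) - x_i(\bids)\,p$ yields $v_i(x_i(\bids)) - x_i(\bids)\,p \ge v_i(q) - q\,p$ for all $q$, i.e.\ $x_i(\bids)$ maximizes $v_i(q) - q\,p$. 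Since the mechanism allocates all $k$ items, $\sum_{i\in N} x_i(\bids) = k$, so the market clears. Efficiency now follows by the first welfare theorem: for the optimal allocation, which (by monotonicity of the valuations) also allocates all $k$ items, summing $v_i(\opt_i(\vals)) - \opt_i(\vals)\,p \le v_i(x_i(\bids)) - x_i(\bids)\,p$ over $i$ and cancelling the price terms (both sides allocate $k$ items, so both contribute $kp$) gives $\SW(\opt(\vals)) \le \SW(\alloc(\bids))$. Hence $\alloc(\bids)$ is socially optimal and $\text{\pne-\poa} = 1$.

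The main obstacle is the delicacy of pure equilibria in a \emph{continuous} first-price auction. The shading deviations only approach the clearing price $p$ (one cannot literally set all winning bids equal to the highest losing bid without destroying the tie-break that determines the winners), and the upward displacing deviations likewise rely on out-bidding by a vanishing margin. Making the ``in the limit'' statements rigorous requires deriving the inequalities $u_i(\bids) \ge v_i(q) - q\,p$ as exact consequences of the PNE conditions under the fixed tie-breaking rule, rather than reasoning through an idealized profile in which all winning bids coincide; this is the genuinely careful part. Once the two deviation bounds are pinned down, the remaining welfare-theorem cancellation is routine and, notably, uses only monotonicity of the valuations, which is why the statement holds for arbitrary (in particular, non-submodular) valuation functions.
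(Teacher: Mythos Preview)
Your approach is essentially the same as the one the paper imports from de~Keijzer et al.\ \cite{deKeijzer2013}. The paper does not give its own proof of this theorem; it cites the key Lemma~\ref{lemKeijzer}, whose three parts are exactly your ``clearing-price structure'' and ``demand optimality'': part~(i) says every winning marginal bid equals the highest losing bid $d$ (your $p$), and parts~(ii),~(iii) combine to $v_i(x_i(\bids)) - x_i(\bids)\,d \ge v_i(q) - q\,d$ for all $q$, which is precisely your statement that $x_i(\bids)$ maximizes $v_i(q) - q\,p$. Your final summation is the ``almost immediate'' step from the lemma to the theorem; framing it as the first welfare theorem is a nice way to make transparent why no submodularity is needed.

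One clarification on the obstacle you flag. You write that shading only approaches $p$ ``in the limit'' and worry about an idealized profile where all winning bids coincide. In fact part~(i) of the lemma asserts \emph{exact} equality $b_i(j) = d$ in any PNE: if some winning marginal were strictly above $d$, lowering it to any value in $(d, b_i(j))$ would be a strict improvement, contradicting equilibrium. So either no PNE exists (and the statement is vacuous), or the fixed tie-breaking rule already resolves the ties at level $d$ in favour of the current winners, and the equilibrium profile itself has all winning bids at $d$. The only genuine limit argument is in the \emph{deviations} (bidding $q$ marginals at $d+\epsilon$ and letting $\epsilon \downarrow 0$), which yields the inequality $u_i(\bids) \ge v_i(q) - q\,d$ directly from the PNE condition; combined with the exact payment $u_i(\bids) = v_i(x_i(\bids)) - x_i(\bids)\,d$ from part~(i), this gives the demand-optimality inequality without any further delicacy.
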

This theorem follows almost immediately from this lemma.
\begin{lemma} \label{lemKeijzer}
\cite{deKeijzer2013} Let $\bids$ be a pure Nash equilibrium in a given Discriminatory Auction where the bidders have general valuation functions. Let $d = \max\{b_i(j) : i\in [n], j \in [k], j > x_i(\bids)\}$. Then
\begin{enumerate} 
\item[(i)] For any bidder $i$ who wins at least one item under $\bids$, and for all $j\in [x_i(\bids)] : b_i(j) = d$,
\item[(ii)] $\ell d \leq \sum_{j = x_i(\bids) - \ell+1}^{x_i(\bids)} m_i(j)$, for all $i\in [n]$ and $\ell \in [x_i(\bids)]$,
\item[(ii)] $\sum_{j = x_i(\bids)+1}^{x_i(\bids)+\ell} m_i(j) \leq \ell d$, for all $i\in [n]$ and $\ell \in [k - x_i(\bids)]$,
\end{enumerate}
\end{lemma}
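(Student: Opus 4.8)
The plan is to read $d$ as the highest losing marginal bid under the equilibrium $\bids$, i.e. the $(k+1)$-st largest entry in the sorted list of all marginal bids, so that $d = \bar p(\bids)$; throughout I write $m_i(j) = v_i(j) - v_i(j-1)$ for the $j$-th marginal valuation, whence $\sum_{j=a}^{b} m_i(j) = v_i(b) - v_i(a-1)$. The whole argument rests on one elementary observation (\emph{Fact A}): since the $k$ items are awarded to the $k$ highest marginal bids and $d$ is the largest bid not awarded an item, every \emph{winning} marginal bid is at least $d$, and in particular $\sum_{j=1}^{x_i(\bids)} b_i(j) \ge x_i(\bids)\,d$ for every winner $i$. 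Because $n \ge 2$ and $k \ge 1$, at least one marginal bid loses, so $d$ is well defined. All three claims are then obtained by exhibiting a single explicit unilateral deviation for bidder $i$ and invoking the $\pne$ inequality $u_i(\bids) \ge u_i(\bids'_i, \bids_{-i})$; crucially I never use submodularity, only that each bid vector is non-increasing, which matches the ``general valuation functions'' hypothesis.

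For the first claim, suppose some winning bid of $i$ exceeds $d$. I let $i$ deviate to $\bids'_i$ that sets all of its top $x_i(\bids)$ marginal bids to $d+\epsilon$ and leaves the rest unchanged. This keeps the vector non-increasing (the untouched entries are losing bids of $i$, hence $\le d < d+\epsilon$), and since $d+\epsilon$ strictly exceeds every bid outside $i$, bidder $i$ still wins exactly $x_i(\bids)$ items while paying $x_i(\bids)(d+\epsilon)$. By Fact A the original payment is $\sum_{j} b_i(j) \ge x_i(\bids)\,d$ with \emph{strict} inequality under the supposition, so for all small enough $\epsilon$ the deviation strictly lowers the payment at unchanged valuation, contradicting the $\pne$ condition. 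Hence every winning bid equals $d$.

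For the second claim fix $\ell \in [x_i(\bids)]$ and let $i$ deviate by shedding its lowest $\ell$ won items: set the marginal bids at positions $x_i(\bids)-\ell+1,\dots,x_i(\bids)$ to $0$ and raise the remaining top $x_i(\bids)-\ell$ bids to $d+\epsilon$. Then $i$ wins exactly $x_i(\bids)-\ell$ items at payment $(x_i(\bids)-\ell)(d+\epsilon) \to (x_i(\bids)-\ell)\,d$ as $\epsilon\to 0$. Using the first claim for the equilibrium payment $x_i(\bids)\,d$, the $\pne$ inequality becomes $v_i(x_i(\bids)) - x_i(\bids)\,d \ge v_i(x_i(\bids)-\ell) - (x_i(\bids)-\ell)\,d$ in the limit, which rearranges to $\sum_{j=x_i(\bids)-\ell+1}^{x_i(\bids)} m_i(j) \ge \ell\, d$. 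The third claim is symmetric: for $\ell \in [k-x_i(\bids)]$ let $i$ raise its top $x_i(\bids)+\ell$ marginal bids to $d+\epsilon$; as these become the $x_i(\bids)+\ell$ globally highest bids, $i$ wins exactly $x_i(\bids)+\ell$ items (this is where $x_i(\bids)+\ell \le k$ is needed) at payment $(x_i(\bids)+\ell)(d+\epsilon)$, and the $\pne$ inequality in the $\epsilon \to 0$ limit yields $\sum_{j=x_i(\bids)+1}^{x_i(\bids)+\ell} m_i(j) \le \ell\, d$.

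The main obstacle to get exactly right is the interaction between the tie-breaking rule and the allocation induced by each deviation: since in equilibrium \emph{all} winning bids are pinned to the single value $d$ (the first claim), a deviation that merely matches $d$ could win or lose items depending on how ties are resolved. I avoid this entirely by always deviating to $d+\epsilon$ on the targeted positions and passing to the limit $\epsilon \to 0^+$; as the $\pne$ condition holds against every deviation, it holds for the whole family indexed by $\epsilon$, and the (continuous) payment expressions give the stated weak inequalities in the limit. The only remaining bookkeeping is to verify that each deviation respects the non-increasing marginal-bid format, which holds by construction because every value I raise is set to the common level $d+\epsilon$ and every value I drop is set to $0$.
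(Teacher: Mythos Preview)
The paper does not give its own proof of this lemma; it is imported verbatim from \cite{deKeijzer2013} and only remarked that the same argument extends to $\gamma>0$. Your overall strategy---exhibit a single deviation per claim and read off the PNE inequality in the $\epsilon\to 0$ limit---is exactly the standard route and is essentially what the cited reference does.

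That said, two steps in your write-up are not quite right. First, in part~(i) you justify that $i$ still wins $x_i(\bids)$ items after lowering to $d+\epsilon$ by asserting that ``$d+\epsilon$ strictly exceeds every bid outside $i$''. That is false: the other winners' marginal bids are only known to be $\ge d$ and may well exceed $d+\epsilon$. The conclusion is nevertheless correct, but for a different reason: all \emph{losing} bids are $\le d<d+\epsilon$, and there are only $k-x_i(\bids)$ winning bids from players other than $i$, so $i$'s $x_i(\bids)$ bids at $d+\epsilon$ are necessarily among the top $k$; meanwhile $i$'s unchanged losing bids still lose by consistent tie-breaking.

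Second, and more substantively, your deviation in part~(ii) need not be a valid bid vector. You set positions $x_i(\bids)-\ell+1,\dots,x_i(\bids)$ to $0$ but leave positions $x_i(\bids)+1,\dots,k$ at their original values; if $b_i(x_i(\bids)+1)>0$ the resulting vector is not non-increasing, contradicting your closing ``bookkeeping'' claim. The easy fix is to zero out \emph{all} positions from $x_i(\bids)-\ell+1$ to $k$. Then $i$ wins at least $x_i(\bids)-\ell$ items (the $d+\epsilon$ bids beat every originally losing bid), and any additional items $i$ might pick up come at zero payment, so $u_i(\bids'_i,\bids_{-i}) \ge v_i(x_i(\bids)-\ell) - (x_i(\bids)-\ell)(d+\epsilon)$ still holds and your limit argument goes through unchanged.
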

We can proceed along exactly the same line of arguments as in the proof of the lemma for $\gamma = 1$ to show that it holds for all $\gamma > 0$. Because there is a first-price component, the reasoning that a player would increase their utiltity by lowering their winning bid to the highest losing bid holds in the same way.
\end{toappendix}

\subsubsection{Coarse Correlated Price of Anarchy}

\begin{theorem} \label{thm:CCEpoaMUANOB}
Consider a multi-unit \yFPA\ and suppose that bidders cannot overbid. 
For $\gamma \lessapprox 0.607$, the coarse correlated price of anarchy is
\begin{equation}\label{eq:upperboundMUAlambert}
\text{\mcce-\poa} \le -(1-\gamma) \mathcal{W}_{-1}\left(-\frac{1}{e^{(2-\gamma)/(1-\gamma)}}\right).
\end{equation}
\end{theorem}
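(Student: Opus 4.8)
The plan is to apply Theorem~\ref{thm:poa-bound} with the free parameter $\alpha$ optimized for the no-overbidding regime. Since no-overbidding is assumed, the bound $\text{\mcce-\poa} \le \max\{1, 1+\alpha-\gamma\}/(\alpha(1-e^{-1/\alpha}))$ holds for \emph{every} $\alpha > 0$. For $\alpha \ge \gamma$ the right-hand side equals $g(\alpha) := (1+\alpha-\gamma)/(\alpha(1-e^{-1/\alpha}))$, while for $\alpha \le \gamma$ it equals the larger quantity $1/(\alpha(1-e^{-1/\alpha}))$, which is decreasing in $\alpha$ and hence best at $\alpha=\gamma$; so it suffices to minimize $g$ over $\alpha \ge \gamma$. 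Throughout I write $c := 1-\gamma$ and substitute $u := 1/\alpha$.

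First I would differentiate $g$ and show that, after clearing denominators and substituting $u = 1/\alpha$, the stationarity condition collapses to
\[
\psi(u) := \bigl(1 + c(1+u)\bigr)e^{-u} - c = 0 .
\]
Because $\psi'(u) = -(1+cu)e^{-u} < 0$ on $(0,\infty)$, with $\psi(0) = 1 > 0$ and $\psi(u) \to -c < 0$ as $u \to \infty$, there is a unique root $u^*$; and since $g(\alpha) \to \infty$ as $\alpha \to 0^+$ (here $\gamma<1$, so the numerator tends to $c>0$) and as $\alpha \to \infty$, the point $\alpha^* := 1/u^*$ is the unique global minimizer of $g$.

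Next I would pin down the range of $\gamma$ for which this minimizer is feasible, i.e.\ $\alpha^* \ge \gamma$, equivalently $u^* \le 1/\gamma$, equivalently (by monotonicity of $\psi$) $\psi(1/\gamma) \le 0$. The implicit function theorem gives $\partial u^*/\partial c < 0$ (since $\partial \psi/\partial c = (1+u)e^{-u}-1 < 0$ and $\partial \psi/\partial u < 0$), so $u^*$ is strictly increasing in $\gamma$ while $1/\gamma$ is strictly decreasing; hence $1/\gamma - u^*$ is strictly decreasing in $\gamma$, positive near $\gamma=0$ and negative near $\gamma=1$, so it has a unique zero $\gamma_0$, and one checks that $\gamma_0 \approx 0.607$ (explicitly, $\psi(1/\gamma) = \frac{1+\gamma-\gamma^2}{\gamma}e^{-1/\gamma} - (1-\gamma)$, which vanishes there). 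I expect this monotonicity bookkeeping — and locating $\gamma_0$ — to be the only genuinely fiddly step, though it is elementary. For $\gamma \le \gamma_0$ we may thus invoke Theorem~\ref{thm:poa-bound} with $\alpha = \alpha^* \ge \gamma$ (legitimately using no-overbidding), giving $\text{\mcce-\poa} \le g(\alpha^*)$.

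Finally I would bring $g(\alpha^*)$ into closed form. From $\psi(u^*)=0$ we get $e^{-u^*} = c/(1+c(1+u^*))$, hence $1 - e^{-u^*} = (1+cu^*)/(1+c(1+u^*))$, and substituting this together with $\alpha^* = 1/u^*$ into $g$ makes everything cancel to $P := g(\alpha^*) = 1 + c(1+u^*)$. Combining the two identities $P = 1 + c + cu^*$ and $Pe^{-u^*} = c$ (the latter is $\psi(u^*)=0$ rewritten) yields $u^* = \ln(P/c)$ and then $s - \ln s = (1+c)/c$ for $s := P/c$, i.e.\ $se^{-s} = e^{-(1+c)/c}$. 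Since $P - c = 1 + cu^* > 0$ we have $s > 1$, so $-s < -1$ and the $\mathcal{W}_{-1}$ branch applies: $-s = \mathcal{W}_{-1}\!\bigl(-e^{-(1+c)/c}\bigr)$. Therefore $P = cs = -(1-\gamma)\,\mathcal{W}_{-1}\!\bigl(-e^{-(2-\gamma)/(1-\gamma)}\bigr)$, using $c = 1-\gamma$ and $(1+c)/c = (2-\gamma)/(1-\gamma)$; this is exactly the claimed bound.
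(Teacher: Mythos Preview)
Your proposal is correct and follows essentially the same approach as the paper: both invoke Theorem~\ref{thm:poa-bound}, restrict to the regime $\alpha \ge \gamma$ where no-overbidding activates the $(1+\alpha-\gamma)$ numerator, optimize over $\alpha$, and express the result via the Lambert $\mathcal{W}_{-1}$ function, noting that the optimizer satisfies $\alpha^* \ge \gamma$ only for $\gamma \lessapprox 0.607$. The paper merely \emph{states} the optimizing $\alpha$ in closed form (equation~\eqref{eq:alphaLambert}) and asserts the resulting bound and the threshold $0.607$ without derivation, whereas you actually carry out the optimization, verify uniqueness of the stationary point, establish the feasibility range via monotonicity, and perform the algebraic reduction to Lambert~$\mathcal{W}$; so your write-up is in fact more complete than the paper's.
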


\sloppy

Combining the improved bound of Theorem~\ref{thm:CCEpoaMUANOB} with the bound of Theorem~\ref{thm:OBpoaMUA} yields the upper bound displayed in Figure \ref{fig:bound-overview}(b) for all $\gamma \in [0,1]$. In particular, we obtain $\textit{\mcce-\poa} \le -\mathcal{W}_{-1}(-e^{-2}) \approx 3.146$ for $\gamma=0$ and $\textit{\mcce-\poa} \le e/(e-1) \approx 1.582$ for $\gamma=1$. 

\fussy

\begin{proof}
Similar to the proof of Theorem \ref{thm:OBpoaMUA}, we choose some $\alpha>0$ to optimize the upper bound on the price of anarchy in Theorem \ref{thm:poa-bound} for any given $\gamma \in [0,1]$. 
As argued in the proof of Theorem~\ref{thm:OBpoaMUA}, it is optimal to use $\alpha = \gamma$ when restricting to $\alpha \leq \gamma$. 
Using the no-overbidding assumption, we can also set $\alpha \geq \gamma$ and obtain
\begin{equation}\label{eq:smooth-specW}
\textit{\mcce-\poa} \leq \frac{1 + \alpha - \gamma}{\alpha \left(1- e^{-1/\alpha}\right)}.
\end{equation}
This upper bound is minimized for
\begin{align} \label{eq:alphaLambert}
    \alpha = -\frac{1}{\mathcal{W}_{-1}\left(-e^{-(2-\gamma)/(1-\gamma)}\right)+ \frac{2-\gamma}{1-\gamma}},   
\end{align}
where $\mathcal{W}_{-1}$ is the lower branch of the Lambert $\mathcal{W}$ function. 
Substituting this into \eqref{eq:smooth-specW}, we obtain the upper bound in \eqref{eq:upperboundMUAlambert}. 
Importantly, the optimized bound in \eqref{eq:upperboundMUAlambert} is only valid if we have $\alpha \ge \gamma$, which does not hold for the entire range $\gamma \in [0,1]$ if we use (\ref{eq:alphaLambert}). 
More concretely, we have $\alpha \ge \gamma$ for all $\gamma \le 0.607...$ only. 
Thus, for $\gamma \leq 0.607...$ we can use (\ref{eq:upperboundMUAlambert}) to bound the price of anarchy. 
For $\gamma \geq 0.607...$ the best we can do is to choose $\alpha=\gamma$ and obtain the same $\textit{\mcce-\poa}$ bound as in Theorem \ref{thm:OBpoaMUA}. 
\qed
\end{proof}

\subsection{Single-Item \yDPA}

We can further improve the price of anarchy bounds for single-item \yDPA. It allows us to make more direct use of the payments giving us more control. We start with the general $n$-player setting, for which we show that the single-item \yDPA\ is fully efficient up to correlated equilibria. For coarse correlated equilibria, we then derive a strong bound for low values of $\gamma$, namely \textit{\mcce-\poa} $\le 1/(1-\gamma)$. This bound can in turn be complemented by the bound we derived for multi-unit auctions. Finally, to improve upon this multi-unit bound for the higher range of $\gamma$, we derive two technically more involved bounds that work specifically in a two-player setting.  

We need some more notation.
Given a bid vector $\bids$, let $\HB(\bids) = \max_{i}\bids_i$ and $\SB(\bids)$ denote the highest and second highest bid in $\bids$, respectively, and let $\HB_{-i}(\bids) = \max_{j \neq i} \bids_j$ be the highest bid excluding bid $\bids_i$. For a randomized bid vector $\Bids$, let $HB(\Bids)$ be the random variable equal to the highest bid when the bids are distributed according to $\Bids$. We sometimes write $\E[\HB(\Bids)]$ for $\E_{\bids \sim \Bids}[\HB(\bids)]$ (similarly for $\SB(\Bids)$ and $\HB_{-i}(\Bids)$).

\myparagraph{Correlated Price of Anarchy.}
We prove that \yDPA\ is fully efficient for all $\gamma \in (0,1]$ up to correlated equilibria. We extend a result in \cite{feldman2016correlated}, which only does it for $\gamma=1$. Below we show that for $\gamma=0$ even coarse correlated equilibria are always efficient, so that Theorem~\ref{thm:singleCE} in fact holds for all $\gamma \in [0,1]$.

\begin{toappendix} 
\subsection{Single-Item Auction} 
\end{toappendix}

\begin{theorem} \label{thm:singleCE}
Consider a single-item $\yDPA$ and suppose that bidders cannot overbid. 
Then, the correlated price of anarchy of $\yDPA$ is $1$ for all $\gamma\in (0,1]$.
\end{theorem}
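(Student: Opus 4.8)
Fix a valuation profile $\vals$. Since $\poa(\vals)\ge 1$ always, it suffices to show $\text{\mce-\poa}(\vals)\le 1$, i.e.\ that \emph{every} correlated equilibrium $\Bids$ of the single-item $\yDPA$ achieves the optimal social welfare. Relabel the bidders so that bidder $1$ has a maximum valuation, write $v^\star=v_1$, and assume first that this maximum is attained uniquely, so that $v^{(2)}:=\max_{j\neq 1}v_j<v^\star$ (ties in the top valuation are handled by the same argument applied to an arbitrary top bidder, together with the remark that it suffices that \emph{some} top bidder wins). Then $\SW(\opt(\vals))=v^\star$, so the whole statement reduces to: \emph{in any $\mce$, bidder $1$ wins with probability $1$}.

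Next I would record the consequences of no overbidding: every bid is at most the bidder's valuation, so the highest competing bid $\HB_{-1}(\bids)$ never exceeds $v^{(2)}<v^\star$, and hence \emph{any} bid of bidder $1$ in $(v^{(2)},v^\star)$ makes bidder $1$ the sole winner. A $\mce$ lets bidder $1$ condition a deviation on the realised recommendation $\bids_1=b$; applied with the deviation ``bid $v^{(2)}+\delta$'' and letting $\delta\downarrow 0$, the equilibrium condition already forces, \emph{because $\gamma>0$}, that bidder $1$ is never recommended a bid above $v^{(2)}$ (such a recommendation wins for sure but pays strictly more in its first-price component than just outbidding all competitors), so the marginal bid of bidder $1$ is supported in $[0,v^{(2)}]$, and the same comparison, done conditionally on $b$, yields a first (coarse) lower bound on the conditional winning probability $q_b:=\Pr[\bids_1\text{ wins}\mid\bids_1=b]$, namely $q_b\ge (v^\star-v^{(2)})/(v^\star-\gamma b-(1-\gamma)v^{(2)})$, equivalently $\rho_b:=1-q_b\le \gamma(v^{(2)}-b)/(v^\star-\gamma b-(1-\gamma)v^{(2)})$.

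The core of the proof is to upgrade this to $q_b=1$ for a.e.\ recommendation $b$; this is where the correlation must genuinely be used. Writing $H_b$ for the conditional distribution of $\HB_{-1}(\bids)$ given $\bids_1=b$, the $\mce$ condition says $b$ is a best response against $H_b$. Suppose for contradiction that $\rho_b>0$ on a set of recommendations of positive probability. The coarse bound above already kills the loss on the ``large'' recommendations ($\rho_b=0$ at $b=v^{(2)}$) and bounds $\Pr[\bids_1\text{ loses}]$ away from $1$; I would then bootstrap: with $\rho_b$ controlled, $H_b$ is concentrated below $b$, which makes \emph{cheaper} ``win-for-sure'' deviations available (bid just above the conditional essential supremum of $\HB_{-1}$ rather than up to $v^{(2)}$), or alternatively use a randomised deviation that shadows the conditional location of $\HB_{-1}$ from slightly above; iterating these estimates should drive $\rho_b$ to $0$. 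Integrating over $b$ gives $\Pr[\bids_1\text{ wins}]=1$, hence $\pe{\SW(\Bids)}=v^\star=\SW(\opt(\vals))$ and $\text{\mce-\poa}(\vals)\le1$; taking the supremum over $\vals$ gives $\text{\mce-\poa}=1$ for all $\gamma\in(0,1]$. For $\gamma=1$ this recovers \cite{feldman2016correlated}, and together with the (stronger) \mcce-efficiency at $\gamma=0$ it settles the entire range.

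The main obstacle is precisely the deviation/bootstrap in the previous paragraph. The difficulty is the first-price component: since a winner pays $\gamma$ times its own bid, any \emph{oblivious} ``bid high'' deviation is penalised and recovers only the coarse-correlated-type estimate $\pe{u_1}\ge v^\star-v^{(2)}$, which does not force full efficiency (indeed the \mcce-\poa\ is strictly larger than $1$). Hence the deviation has to be adapted to the conditional distribution $H_b$ of the competing bid — exactly the extra power a $\mce$ has over a $\mcce$ — and one must carefully handle atoms of $H_b$ (ties at the winning bid) through a limiting/tie-breaking argument, while also keeping the degeneracy $\gamma\to0$ in mind, where the perturbation collapses and one instead appeals to the known efficiency of $\mcce$ for the second-price auction.
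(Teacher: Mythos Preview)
Your proposal has a genuine gap at exactly the point you flag as the ``main obstacle'': the bootstrap from the coarse bound $q_b\ge(v^\star-v^{(2)})/(v^\star-\gamma b-(1-\gamma)v^{(2)})$ to $q_b=1$ is not carried out, and the sketch you give does not work as stated. Knowing that $H_b$ puts mass at least $q_b$ on $[0,b]$ says nothing about the conditional essential supremum of $\HB_{-1}$, which may still be $v^{(2)}$; so your ``cheaper win-for-sure deviation'' is in general no cheaper than the one you already used. More fundamentally, bidder~$1$'s best-response condition alone does \emph{not} force $q_b=1$: for instance, with $\gamma=1$, $v^\star=1$, $v^{(2)}=0.8$ and $H_b$ putting mass $\tfrac12$ at $0$ and $\tfrac12$ at $0.8$, the bid $b=0^+$ is a genuine best response for bidder~$1$ with $q_b=\tfrac12$. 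What rules such conditional distributions out in an actual CE is the equilibrium condition for the \emph{other} bidders, which your argument never invokes.

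The paper's proof sidesteps the bootstrap entirely, and in a way that explains why you got stuck. It introduces $b^\ast=\inf\{b:\P[\HB(\Bids)<b]>0\}$, the bottom of the support of the highest bid, and observes that if some bidder $i$ with $v_i<v_1$ wins with positive probability then $b^\ast\le v_i$ by no-overbidding. For $b$ slightly above $b^\ast$ one has $\P[\HB(\Bids)<b]>0$, and at least one of bidder~$1$ or bidder~$i$ wins at bids $\le b$ with at most half this probability; \emph{that} bidder (which may well be the low-valuation bidder~$i$, not bidder~$1$) then has a profitable CE-deviation ``raise my bid to roughly the midpoint of $(b^\ast,v_i)$ whenever I am recommended a bid $\le b$''. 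The first-price penalty of this deviation is at most $\gamma$ times the width of a small interval above $b^\ast$, while the gain from the newly-won events is of order $v_i-b^\ast>0$, so a single comparison yields the contradiction --- no iteration is needed. The two ideas you are missing are thus: (i) anchor the deviation at the infimum of the highest-bid support rather than at $v^{(2)}$, and (ii) allow the deviating player to be a low-valuation winner, not only bidder~$1$.
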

\begin{proof}
Without loss of generality assume that player 1 has the highest valuation $v_1$. Assume towards contradiction that the \textit{\mce-\poa} is not 1. Then, there must be a player $i$ for which $v_i < v_1$ who has a positive probability of winning. Let $b^* = \inf\{b \mid \P[HB(\ce) < b] > 0\}$. Since we assume that players cannot overbid, we know that $b^* \leq v_i < v_1$. 

First, suppose $b^* = v_i$. Then $\P[HB(\ce) < v_i] = 0$. If player 1 bids $(v_i+v_1)/2$ whenever we draw $\bids\sim\ce$ for which $\bids_1 < b^*$ (in case ties are broken in favor of player 1) or whenever $\bids_1 \leq b^*$ (in case ties are broken in favor of player $i$) then player $1$ strictly increases their utility. This contradicts the correlated equilibrium assumption.

Thus we can assume $b^* < v_i$. Define $\tilde b = (b^* + v_i)/2$. Fix a bid $b$ such that $b^* < b < \tilde b$. By assumption we have $\P[HB(\ce) < b] > 0$.  Either player 1 or player $i$ must win by bidding not higher than $b$ with probability at most $\P[HB(\ce) < b]/2$. Let it be player $i$ (otherwise just fill in 1 for $i$ in what follows).  

Consider the following deviating strategy for player $i$: bid $(\tilde b + b)/2$ whenever we draw $\bids\sim\ce$ for which $\bids_i \leq b$ and $\bids_i$ otherwise. For $\bids_i > b$ nothing changes and so the utility stays the same. Next, consider $\bids_i \leq b$.
Player $i$ already won with probability at most $\P[HB(\ce) < b]/2$. 
Now that they bid higher note that the second bid part of the price does not change while the highest bid part goes up by at most $\gamma ((\tilde b + b)/2 - b^*)$ 
On the other hand player $i$ will gain (lower bounding the second highest bid by the highest bid) at least $\P[HB(\ce) < b]/2 \cdot (v_i - (\tilde b + b)/2)$. Net, the utility of player $j$ increases by at least 
\begin{align*}
    \frac{\P[HB(\ce)< b ]}{2} \left(\left(v_i - \frac{\tilde{b} + b}{2}\right) - \left(\frac{\tilde b + b}{2} - b^*\right)\right) &> \\
    \frac{\P[HB(\ce)< b ]}{2} \left(v_i + b^* - 2\tilde{b}\right) &= 0,    
\end{align*}
where we use that $b < \tilde b$. Again, we find a contradiction with the CE conditions. Hence, there cannot be a player $i$ with $v_i < v_1$ having a positive probability of winning implying that the price of anarchy must be 1. \qed
\end{proof}

\myparagraph{Coarse Correlated Price of Anarchy.}
It is known that the coarse correlated price of anarchy for the first-price auction is approximately $1.229$ \cite{feldman2016correlated}, which implies that the result of Theorem~\ref{thm:singleCE} does not extend to coarse correlated equilibria.
We derive the following bound which is good for small values of $\gamma$.

\begin{theorem} \label{thm:CCEsingleItemNpl}
Consider a single-item $\yDPA$ and suppose that bidders cannot overbid. 
Then, the coarse correlated price of anarchy of $\yDPA$ is at most $1/(1-\gamma)$ for all $\gamma \in [0,1)$.
\end{theorem}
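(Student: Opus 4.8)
The plan is to bound the \mcce-\poa\ directly, without routing through the smoothness bound of Theorem~\ref{thm:poa-bound} (which cannot produce the constant $1/(1-\gamma)$), by combining one explicit deviation for the highest-value bidder with a pointwise welfare-accounting inequality. Fix a coarse correlated equilibrium $\Bids$ and relabel the bidders so that $v_1 = \max_i v_i$; since there is a single item, $\SW(\opt(\vals)) = v_1$. All bids occurring under $\Bids$ obey no-overbidding, so $\bids_j \le v_j$ for every $j$. The deviation I would analyze for bidder $1$ is the fixed bid $\bids'_1 = v_1$, which is itself admissible under the no-overbidding assumption (it holds with equality).

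First I would lower-bound the utility of this deviation against an arbitrary realized profile $\bids$. Writing $h = \HB_{-1}(\bids)$, no-overbidding of the other bidders gives $h \le v_1$. If $h < v_1$ then bidder $1$ wins, all other bids are losing bids so the highest losing bid is exactly $h$, the $\yDPA$ payment is $\gamma v_1 + (1-\gamma)h$, and $u_1(\bids'_1,\bids_{-1}) = (1-\gamma)(v_1 - h)$; if $h = v_1$ (which can only happen through a tie at value $v_1$) then $u_1(\bids'_1,\bids_{-1}) = 0 = (1-\gamma)(v_1 - h)$ regardless of how the tie is resolved. Hence $u_1(\bids'_1,\bids_{-1}) \ge (1-\gamma)(v_1 - \HB_{-1}(\bids))$ for every $\bids$, and the coarse correlated equilibrium condition for bidder $1$ yields
\[
\E_{\bids \sim \Bids}[u_1(\bids)] \;\ge\; (1-\gamma)\bigl(v_1 - \E[\HB_{-1}(\Bids)]\bigr).
\]

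The second step is the pointwise inequality $u_1(\bids) + (1-\gamma)\HB_{-1}(\bids) \le \SW(\bids)$ for every no-overbidding profile $\bids$, which I would verify by two cases. If bidder $1$ wins, then $\SW(\bids) = v_1$, the highest losing bid equals $\HB_{-1}(\bids)$, $u_1(\bids) = v_1 - \gamma\bids_1 - (1-\gamma)\HB_{-1}(\bids)$, and the inequality reduces to $\gamma\bids_1 \ge 0$. If bidder $1$ loses, then $u_1(\bids) = 0$, the winner $w \ne 1$ has bid $\HB_{-1}(\bids) = \bids_w \le v_w = \SW(\bids)$ by no-overbidding, and $(1-\gamma)\bids_w \le \bids_w$ closes the case. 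Taking expectations and combining with the displayed inequality gives
\[
(1-\gamma)\,\SW(\opt(\vals)) = (1-\gamma)v_1 \;\le\; \E[u_1(\bids)] + (1-\gamma)\E[\HB_{-1}(\Bids)] \;\le\; \E[\SW(\Bids)],
\]
so $\text{\mcce-\poa} \le 1/(1-\gamma)$; at $\gamma = 0$ this also recovers that every coarse correlated equilibrium of the single-item second-price auction is efficient.

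I do not anticipate a substantive obstacle; the two delicate points are essentially cosmetic. One must check that the deviation bid $v_1$ respects no-overbidding (it does), and one must handle ties at the top value $v_1$ — these are absorbed seamlessly because the lower bound $u_1(\bids'_1,\bids_{-1}) \ge (1-\gamma)(v_1 - \HB_{-1}(\bids))$ has a right-hand side that vanishes in precisely the only situation where a tie can arise. Everything else is bookkeeping with the two-term payment rule $p^\gamma$.
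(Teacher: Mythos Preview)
Your proof is correct and is essentially the same as the paper's: both let the highest-value bidder deviate to $v_1$ to get $\E[u_1(\bids)] \ge (1-\gamma)(v_1 - \E[\HB_{-1}(\Bids)])$, and then establish the pointwise inequality $\SW(\bids) - u_1(\bids) \ge (1-\gamma)\HB_{-1}(\bids)$ (the paper phrases this as $\sum_{i\neq 1} u_i(\bids) + p^\gamma(\bids) \ge (1-\gamma)\HB_{-1}(\bids)$, which is algebraically identical) via the same two-case analysis. The only cosmetic difference is the handling of ties at $v_1$: the paper fixes the tie-breaking in player~1's favor so that the deviation always wins, whereas you observe that the lower bound $(1-\gamma)(v_1-\HB_{-1}(\bids))$ vanishes precisely when a tie at $v_1$ can occur, so the bound holds regardless of how the tie is resolved.
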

\begin{proof}
Let player 1 be the player with highest valuation $v_1$, and if there are multiple players with the highest valuation the player for whom ties are broken in favor when bidding $v_1$.
Let $\Bids$ be a coarse correlated equilibrium. 
We have
\begin{align}
\E_{\bids \sim \Bids}[\SW(\bids)] & = 
\E_{\bids \sim \Bids}[u_1(\bids)] + 
\E_{\bids \sim \Bids}\bigg[\sum_{i \neq 1} u_{i}(\bids) + p^\gamma(\bids) \bigg]. \label{eq:ESW_u1}
\end{align}

Define $\mathcal{E}$ as the event that player $1$ wins the auction with respect to $\Bids$, and let $\bar{\mathcal{E}}$ be the complement event that player $1$ does not win the auction with respect to $\Bids$.

Suppose player $1$ deviates to $v_1$. Then player 1 wins under $(v_{1}, \bids_{-1})$ because either they are the single highest bid or ties are broken in their favor by assumption and no player overbids; note that this holds independently for $\mathcal{E}$ and $\bar{\mathcal{E}}$.
By the $\mcce$ conditions, we thus have 
\begin{align*}
\E_{\bids \sim \Bids}[u_1(\bids)]
 &\ge \E_{\bids \sim \Bids}[u_1(v_1, \bids_{-1})]  \\
 &= (1-\gamma) v_{1} - (1-\gamma) \E_{\bids \sim \Bids} \left[\HB_{-1}(\bids)\right].
\end{align*}
Substituting this inequality in (\ref{eq:ESW_u1}), we obtain 
\begin{align*}
\E_{\bids \sim \Bids}[\SW(\bids)] &\ge 
(1-\gamma) v_{1} - (1-\gamma) \E_{\bids \sim \Bids} \left[\HB_{-1}(\bids)\right] \\
&\quad+ \E_{\bids \sim \Bids}\bigg[\sum_{i \neq 1} u_{i}(\bids) + p^\gamma(\bids) \bigg].
\end{align*}
The proof thus follows if we can show that 
\begin{align}
\E_{\bids \sim \Bids}\bigg[\sum_{i \neq 1} u_{i}(\bids) + p^\gamma(\bids) \bigg]
\ge 
(1-\gamma) \E_{\bids \sim \Bids} \left[\HB_{-1}(\bids)\right].
\label{eq:other-utility-bound}
\end{align}
\underline{Case 1:} Suppose $\bids \in \mathcal{E}$. Then player $1$ wins the auction with respect to $\bids$ and we have 
\[
\sum_{i \neq 1} u_{i}(\bids) + p^\gamma(\bids) = \gamma \bids_{1} + (1-\gamma) \HB_{-1}(\bids) \ge \HB_{-1}(\bids).
\]
\underline{Case 2:} Suppose $\bids \in \bar{\mathcal{E}}$. Then some other player $i' \neq 1$ wins the auction with respect to $\bids$ and we have 
\[
u_{i'}(\bids) + p^\gamma(\bids) = v_{i'} - p^\gamma(\bids) + p^\gamma(\bids) = v_{i'} \ge \bids_{i'} = \HB_{-1}(\bids),
\]
where last inequality holds because $i'$ does not overbid and the last equality holds because $i'$ being the highest bidder implies that $\bids_{i'} = \HB_{-1}(\bids)$.
This concludes the proof. \qed
\end{proof}

Any upper bound for the multi-unit auction setting of course also holds for the single-item setting. 
By combining the bounds of Theorem \ref{thm:OBpoaMUA}, Theorem \ref{thm:CCEpoaMUANOB} and Theorem \ref{thm:CCEsingleItemNpl}, we obtain the upper bound displayed in Figure \ref{fig:bound-overview}(c) for the coarse correlated price of anarchy in the single-item auction setting.

\myparagraph{Coarse Correlated Price of Anarchy for 2-player Auctions.} We now present a more fine-grained picture for the coarse correlated price of anarchy for the 2-player setting. Ultimately, the upper bound for \textit{\mcce-\poa} for two players becomes a combination of three upper bounds, as represented by the three colors in Figure \ref{fig:bound-overview}(d). We already derived the bound we use for small values of $\gamma$ in Theorem \ref{thm:CCEsingleItemNpl}, corresponding to the green graph in the figure. To derive the two remaining bounds, we use an approach inspired by \cite{feldman2016correlated}. The extra difficulty we have is bounding the second-price component. The first-price has a direct relation with winning the auction and so we can use the CCE conditions to bound it while the second-price component is more difficult to get a grip on.
These bounds significantly improve on the bounds of Theorem \ref{thm:OBpoaMUA} and Theorem \ref{thm:CCEpoaMUANOB}. 

First we tackle the interval $\gamma \in [\frac12, 1]$. Note that for $\gamma = 1$ this bound coincides with the (tight) bound in \cite{feldman2016correlated}.

\begin{theorem}\label{thm:CCE2A}
Consider a 2-player single-item $\yDPA$ and suppose that bidders cannot overbid. For $\gamma \in [\frac12, 1]$, the coarse correlated price of anarchy of $\yDPA$ is upper bounded by the blue graph in Figure \ref{fig:bound-overview}(d) (with \textit{\mcce-\poa} $\le 1.295...$ for $\gamma = 0.5$ and \textit{\mcce-\poa} $\le 1.229...$ for $\gamma = 1$).
\end{theorem}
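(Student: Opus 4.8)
\emph{Proof plan.} The plan is to reason directly about the cumulative distribution functions of the equilibrium bids, in the spirit of Feldman et al.~\cite{feldman2016correlated}, but keeping the first-price (highest-bid) component of the payment — which the coarse correlated equilibrium conditions control tightly — separate from the second-price (second-highest-bid) component, which they do not.

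\emph{Setup.} Assume without loss of generality that player~$1$ has the highest value and normalise $v_1=1$; write $v:=v_2\in[0,1]$ (the case $v=1$ is trivial since then every outcome is efficient). Fix an arbitrary coarse correlated equilibrium $\Bids$ of the $2$-player single-item $\yDPA$. Because $k=1$, the winner pays $p^\gamma(\bids)=\gamma\,\HB(\bids)+(1-\gamma)\,\SB(\bids)$, so $\HB$ is the first-price payment and $\SB$ the second-price payment. Let $F$ and $S$ be the CDFs of $\HB(\Bids)$ and $\SB(\Bids)$; by the no-overbidding assumption both are supported on $[0,1]$ (the winner's bid never exceeds their value), and $F(b)\le S(b)$ for all $b$ since $\HB\ge\SB$ pointwise. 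Letting $q$ be the probability that player~$2$ wins, the expected welfare is $\E[\SW(\Bids)]=1-q(1-v)$, so the task reduces to showing $q(1-v)\le 1-1/\rho(\gamma)$, where $\rho(\gamma)$ is the value of the blue curve in Figure~\ref{fig:bound-overview}(d).

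\emph{Constraints from the equilibrium conditions.} The first step is the ``easy'' first-price constraint. If player~$1$ deviates to a fixed bid $b\in[0,1]$, they win whenever player~$2$'s bid is at most $b$, in which case they pay $\gamma b+(1-\gamma)\,\SB\le b$ and gain at least $1-b$; since $\{\HB\le b\}$ is contained in the event that player~$2$ bids at most $b$, the $\mcce$ condition for player~$1$ yields $\E[u_1(\Bids)]\ge(1-b)F(b)$ for every $b$, hence $\E[u_1]\ge\sup_{b}(1-b)F(b)$. Writing $\E[\SW]=\E[u_1]+\E[u_2]+\gamma\,\E[\HB]+(1-\gamma)\,\E[\SB]$ with $\E[\HB]=\int_0^1(1-F(b))\,db$, $\E[\SB]=\int_0^1(1-S(b))\,db$ and $\E[u_2]\ge0$, this already produces a lower bound on $\E[\SW]$ as a functional of $F$, $S$, $v$ and $\gamma$. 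To turn it into the claimed bound one must also constrain $S$: the plan is to use $F\le S$, the fact that $\SB$ equals player~$1$'s bid precisely on the event that player~$2$ wins, and the $\mcce$ condition for player~$2$ (whose fixed-bid deviations compete exactly against player~$1$'s bid distribution), together with the no-overbidding bound $v\ge\HB$ on the event player~$2$ wins, to pin down how small $\E[\SB]$ and $q$ can be relative to $F$ and $v$.

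\emph{Optimisation and the main obstacle.} It then remains to minimise the resulting lower bound on $\E[\SW]$ over all admissible $(F,S,v)$ for a fixed $\gamma\in[\tfrac12,1]$. I expect the optimal $F$ to saturate the first-price constraint $(1-b)F(b)\le\E[u_1]$ on a subinterval of its support and to be flat elsewhere, and the optimal $S$ to be as small as the order constraint $S\ge F$ and the player-$2$ conditions permit, so that the problem collapses to a transcendental optimisation in one or two scalar parameters, solved by a somewhat involved numerical analysis; the optimal value is the blue curve, specialising to $e/(e-1)=1.229\ldots$ at $\gamma=1$ (recovering the bound of \cite{feldman2016correlated}) and to $1.295\ldots$ at $\gamma=\tfrac12$. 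The hard part is the second-price component: since $\SB$ is not determined by who wins, the equilibrium conditions give no direct handle on its distribution, and the bound has to be squeezed out of the combination of both players' deviation inequalities, the nonnegativity of utilities, and the stochastic domination $F\le S$. This is also, I expect, where the hypothesis $\gamma\ge\tfrac12$ is used: the balance between the $\gamma$-weighted first-price term and the $(1-\gamma)$-weighted second-price term determines which constraint is binding, and only for $\gamma\ge\tfrac12$ does this argument improve on the elementary $1/(1-\gamma)$ bound of Theorem~\ref{thm:CCEsingleItemNpl}; the complementary range is covered by the different balancing of Theorem~\ref{thm:CCE2B}.
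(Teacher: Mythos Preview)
Your overall framework (CDF analysis, lower-bounding $\E[\SW]=\E[u_1]+\E[u_2]+\gamma\E[\HB]+(1-\gamma)\E[\SB]$ via the \mcce\ conditions) is correct and matches the paper's, but your plan for controlling the second-price term has a genuine gap.

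The constraint $F\le S$ that you highlight is useless here: since $\E[\SB]=\int_0^1(1-S)$, you need an \emph{upper} bound on $S$, and $F\le S$ gives only a lower bound. Your other suggestions (``$\SB$ equals player~$1$'s bid when player~$2$ wins'', the \mcce\ condition for player~$2$) do not directly bound $S$ either. What the paper actually uses is the two-player inclusion--exclusion identity
\[
F_{\SB}(x)=F_{\HB_{-1}}(x)+F_{\HB_{-2}}(x)-F_{\HB}(x),
\]
where $\HB_{-i}$ is simply the other player's bid. Substituting this into the payment collapses it to
\[
\gamma\E[\HB]+(1-\gamma)\E[\SB]=(2\gamma-1)\!\int_0^1\!\!(1-F_{\HB})+(1-\gamma)\sum_{i=1}^2\int_0^1\!\!(1-F_{\HB_{-i}}),
\]
so that every CDF appearing now carries an \emph{upper} bound from a \mcce\ condition: $F_{\HB_{-1}}(x)\le\alpha/(1-x)$, $F_{\HB_{-2}}(x)\le\beta/(v-x)$, and $F_{\HB}\le\min$ of the two, with $\alpha=\E[u_1]$, $\beta=\E[u_2]$. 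This is also precisely where $\gamma\ge\tfrac12$ enters: it makes the coefficient $2\gamma-1$ nonnegative, so the upper bound on $F_{\HB}$ points the right way. Your explanation of the role of $\gamma\ge\tfrac12$ is too vague; it is this sign, not a comparison with the $1/(1-\gamma)$ bound, that drives the case split with Theorem~\ref{thm:CCE2B}.

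A secondary issue: your first-price constraint $\E[u_1]\ge(1-b)F_{\HB}(b)$ is weaker than what the paper uses. If player~$1$ bids $b$ they win whenever $\HB_{-1}\le b$, giving $\E[u_1]\ge(1-b)F_{\HB_{-1}}(b)$; since $F_{\HB}\le F_{\HB_{-1}}$, you are throwing away exactly the slack needed in the optimisation. With the identity and the sharper marginal-CDF bounds in hand, the paper reduces to a two-parameter minimisation in $(\alpha,\beta)$ (after showing the optimum has $v=1-\alpha$), solved numerically to produce the blue curve.
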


\begin{proof}
Without loss of generality we assume that player 1 has a valuation of $1$ and player 2 has a valuation of $v\leq 1$. Fix $\gamma$ and consider some coarse correlated equilibrium $\Bids$. Let $\alpha = \E[u_1(\Bids)]$ be the utility of player 1 and $\beta = \E[u_2(\Bids)]$ be the utility of player $2$ in $\Bids$. The maximum social welfare is clearly $1$, namely when player 1 wins all the time. Lower bounding the expected welfare of an arbitrary $\Bids$ translates into an upper bound on the price of anarchy. We have
\[
\E[\SW(\Bids)] \geq \alpha + \beta + \E[p^\gamma(\Bids)] = \alpha + \beta + \gamma \E[\HB(\Bids)] + (1-\gamma)\E[\SB(\Bids)]
\]
We try to find the $v, \alpha$ and $\beta$ that minimize this expression and this will then give a lower bound on the expected social welfare. Let $F_X$ be the cumulative distribution function of the random variable $X$ where $X \in \{\HB, \HB_{-1}, \HB_{-2}, \SB\}$. Then by the CCE conditions, and the fact that a CDF is always bounded by 1, we know that
\begin{align}
    F_{\HB_{-1}(\sigma)} (x) &\leq \min\left\{\frac{\alpha}{1-x}, 1\right\} , \ F_{\HB_{-2}(\sigma)} (x) \leq \min\left\{\frac{\beta}{v-x}, 1\right\} \label{eq:HB-i}\\
    F_{\HB(\sigma)}(x) &\leq \min\left\{\frac{\alpha}{1-x}, \frac{\beta}{v-x}, 1\right\} \label{eq:HB}
\end{align}
For example, if $F_{\HB_{-1}(\Bids)} > \frac{\alpha}{1-x}$ and player 1 changes their bid to $x$ their utility will be strictly greater than $\frac{\alpha}{1-x} \cdot (1 - x) = \alpha$ which is more than their current utility contradicting the CCE conditions. 

Also note that $\alpha \geq 1-v$ because player 1 bidding $v+\epsilon$ will yield a utility of at least $1-v-\epsilon$ for any positive $\epsilon$. The other player is not allowed to bid above $v$, thus player 1 always wins when bidding $v+\epsilon$.

Observe that for $n = 2$ players the following chain of equalities holds
\begin{align}
    F_{\SB(\Bids)}(x) &= \P[\SB(\Bids) \leq x] \nonumber\\
    &= \P[\min(\HB_{-1}(\Bids), \HB_{-2}(\Bids)) \leq x] \nonumber\\
    &= \P[\HB_{-1}(\Bids)\leq x] + \P[\HB_{-2](\Bids)}\leq x] - \P[HB(\Bids)\leq x] \nonumber\\
    &= F_{\HB_{-1}(\Bids)}(x) + F_{\HB_{-2}(\Bids)}(x)  - F_{\HB(\Bids)}(x). \label{FSBUB}
\end{align}

Let us get a more explicit expression for the expected payment using (\ref{FSBUB})
\begin{align}
    \E[p^\gamma(\Bids)]&=\gamma \E[\HB(\Bids)] + (1-\gamma)\E[\SB(\Bids)] \nonumber\\
    &= \gamma\int_0^1 1 - F_{\HB(\Bids)}(x) dx +  (1-\gamma)\int_0^1 1 - F_{\SB(\Bids)}(x) dx  \label{Payment}\\
    &= \gamma\int_0^1 1 - F_{\HB(\Bids)}(x) dx +  (1-\gamma)\cdot \nonumber \\
    & \quad\int_0^1 1 - F_{\HB_{-1}(\Bids)}(x) -F_{\HB_{-2}(\Bids)}(x) + F_{\HB(\Bids)}(x) dx  \nonumber\\
    &= (2\gamma - 1) \int_0^1 1-  F_{\HB(\Bids)}(x) dx + \nonumber \\
    &\quad (1-\gamma)\sum_{i=1}^2 \int_0^1 1 - F_{\HB_{-i}(\Bids)}(x)dx \nonumber
\end{align}

Using the two bounds in (\ref{eq:HB-i}) we can lower bound the two integrals in the summation
\begin{align*}
    \int_0^1 1 - F_{\HB_{-1}(\Bids)}(x) dx \geq \int_0^{1-\alpha} 1-\frac{\alpha}{1-x} dx = 1 - \alpha + \alpha\ln(\alpha) \\
    \int_0^1 1 - F_{\HB_{-2}(\Bids)}(x) dx \geq \int_0^{v-\beta} 1 -\frac{\beta}{v-x} dx = v - \beta + \beta\ln(\beta/v) 
\end{align*}
If $\gamma \geq \frac12$ then $2\gamma - 1 \geq 0$ and so we can use (\ref{eq:HB}) to lower bound the integral on the left by
\begin{align*}
    \int_0^1 1 - F_{\HB(\Bids)}(x) dx \geq \int_0^1 1 - \min\left\{\frac{\alpha}{1-x}, \frac{\beta}{v-x}, 1\right\} dx
\end{align*}
We split up in two cases. \\ \\
\underline{Case 1:} $\beta \geq v\alpha$. Then $\frac{\beta}{v-x} \geq \frac{\alpha}{1-x}$ for all $x \in [0,v]$ and so
\begin{equation}
\int_0^1 1 - \min\left\{\frac{\alpha}{1-x}, \frac{\beta}{v-x}, 1\right\} dx = \int_0^{1-\alpha}1 - \frac{\alpha}{1-x} dx = 1 - \alpha + \alpha\ln(\alpha)
\end{equation}
giving a lower bound on expected welfare of
\begin{align*}
\E[\SW(\Bids)] \geq \alpha &+ \beta + (2\gamma - 1)(1 - \alpha + \alpha \ln(\alpha))\\ &+(1-\gamma)\left(1 - \alpha+\alpha\ln(\alpha) + v - \beta+ \beta\ln(\beta/v)\right) 
\end{align*}
Using that $v \geq 1-\alpha$ and $\beta \geq v\alpha \geq \alpha(1-\alpha)$ this is lower bounded by
\begin{align} \label{LB1}
    1 + \alpha(1-\alpha)\ln(\alpha) + \gamma \alpha(1 - \alpha + \alpha\ln(\alpha))
\end{align}
\underline{Case 2:} $\beta < v\alpha$. First $\frac{\beta}{v-x}$ is smaller than $\frac{\alpha}{1-x}$ until $x = \theta = \frac{\alpha v - \beta}{\alpha-\beta}$ when $\frac{\beta}{v-x}$ takes over. In this case the integral is bounded from below by
\begin{align*}
\int_0^1 1 - \min\left\{\frac{\alpha}{1-x}, \frac{\beta}{v-x}, 1\right\} dx &\geq \\
\int_0^\theta 1 - \frac{\beta}{v-x} dx + \int_\theta^{1-\alpha} 1 - \frac{\alpha}{1-x} dx &=\\
\alpha \ln\left(\frac{\alpha-\beta}{1-v}\right) + 1 - \alpha + \beta\ln\left(\frac{\beta(1-v)}{v(\alpha-\beta)}\right) &
\end{align*}
which gives a lower bound on the expected social welfare of
\begin{align}
\label{LB2}
\E[\SW(\Bids)] & \geq \alpha +\beta \\
&\quad+(2\gamma-1)\left(\alpha \ln\left(\frac{\alpha-\beta}{1-v}\right) + 1 - \alpha + \beta\ln\left(\frac{\beta(1-v)}{v(\alpha-\beta)}\right)\right)  \notag \\
&\quad + (1-\gamma)\left(1 -\alpha +\alpha\ln(\alpha) + v - \beta + \beta\ln(\beta/v)\right) \notag
\end{align}
The derivative with respect to $v$ is
\[
(2\gamma -1)\frac{\alpha v- \beta}{(1-v)v} + (1-\gamma)(1-\beta/v)
\]
For $\beta< v\alpha$ this is positive and thus the minimum is attained when $v$ is smallest, i.e., $v=1-\alpha$. Substituting that in (\ref{LB2}) gives

\begin{align}
\E[\SW(\Bids)] &\geq \alpha +\beta \nonumber \\
&\quad+(2\gamma-1)\left(\alpha \ln\left(\frac{\alpha-\beta}{\alpha}\right) + 1 - \alpha + \beta\ln\left(\frac{\beta\alpha}{(1-\alpha)(\alpha-\beta)}\right)\right)  \nonumber \\
&\quad + (1-\gamma)\left(1 -\alpha +\alpha\ln(\alpha) + 1-\alpha - \beta + \beta\ln\left(\frac{\beta}{1-\alpha}\right)\right) \nonumber  \\
&= 1 + \gamma \beta + (2\gamma-1)(\alpha-\beta) \ln(\alpha-\beta) \nonumber \\
&\quad+ (2-3\gamma)\alpha\ln(\alpha) + (2\gamma-1)\beta\ln(\alpha)  \label{eq:ESW2}\\
& \quad + \gamma \beta\ln(\beta) - \gamma \beta \ln(1-\alpha) \nonumber
\end{align}
Note that filling in $\beta = \alpha(1-\alpha)$ yields the same revenue as in (\ref{LB1}). Changing $\beta < v\alpha$ to $\beta \leq v\alpha = \alpha(1-\alpha)$ subsumes case 1. So we only have to find the minimum in case 2.

The derivative of (\ref{eq:ESW2}) with respect to $\beta$ is
\begin{align*}
(2\gamma-1)\ln\left(\frac{\alpha}{\alpha-\beta}\right) + \gamma \ln\left(\frac{\beta}{1-\alpha}\right) + 1
\end{align*}
This becomes $0$ when
\begin{align*}
\ln\left(\frac{\alpha^{2\gamma-1}\beta^\gamma}{(\alpha-\beta)^{2\gamma-1}(1-\alpha)^\gamma}\right) = -1 
&\iff \frac{\beta^\gamma}{(\alpha-\beta)^{2\gamma-1}} - \frac{(1-\alpha)^\gamma}{e \alpha^{2\gamma-1}} = 0
\end{align*}
For fixed $\alpha$ the expression on the left is negative for $\beta$ close to 0, and positive for $\beta$ close to $\alpha$. Also the second derivative with respect to $\beta$ is always positive on $[0,\beta]$. Thus we can use binary search to quickly find $\beta$ satisfying the equality. Call this $\beta_\alpha$.\footnote{$\beta_\alpha$ may violate the case assumption that $\beta \leq v\alpha$ but removing this restriction can only decrease the minimum value of the expected social welfare.}

Then we have
\begin{align*}
    \E[\SW(\Bids)] &\geq  1 + \gamma \beta_\alpha + (2\gamma-1)(\alpha-\beta_\alpha) \ln(\alpha-\beta_\alpha) \\
    &\quad+ (2-3\gamma)\alpha\ln(\alpha) + (2\gamma-1)\beta_\alpha\ln(\alpha) \\
& \quad + \gamma \beta_\alpha\ln(\beta_\alpha) - \gamma \beta_\alpha \ln(1-\alpha)
\end{align*}

For $\gamma = 1/2$ we compute $\beta_\alpha = (1-\alpha)/{e^2}$ and then the social welfare is minimized for $\alpha = e^{-1-1/e^2} \approx 0.3213...$ with value $0.7716...$. While for $\gamma = 1$ we have $\beta_\alpha = {\alpha(1-\alpha)}/(1-\alpha+e\alpha)$ where the social welfare is minimized for $\alpha \approx 0.2743...$ with value $0.8135...$. In both cases (\ref{eq:ESW2}) becomes a unimodal function. Plotting (\ref{eq:ESW2}) for various values of $\alpha$, when doing binary search to find $\beta_\alpha$ as a subroutine, suggests that this is the case for all $\gamma$. Making this assumption we can use a ternary search on $\alpha$ with a binary search to find $\beta_\alpha$ as a subroutine to quickly find the minimum.
Finally, taking 1 over this value gives us an upper bound on the price of anarchy, presented as the blue graph in Figure \ref{fig:bound-overview}(d). \qed
\end{proof}

The previous theorem holds for $\gamma \in [\frac12, 1]$. With a similar proof template, making use of an upper bound on the highest bid, we can derive an upper bound on the coarse correlated price of anarchy for the lower to mid range of $\gamma$
\begin{theorem}\label{thm:CCE2B}
Consider a 2-player single-item $\yDPA$ and suppose that bidders cannot overbid. For $\gamma \in (0.217..., \frac12]$, the coarse correlated price of anarchy of $\yDPA$ is upper bounded by the orange graph in Figure \ref{fig:bound-overview}(d) (with \textit{\mcce-\poa} $\le 1.515...$ for intersection point $\gamma = 0.339...$ and \textit{\mcce-\poa} $\le 1.295...$ for $\gamma = 0.5$).
\end{theorem}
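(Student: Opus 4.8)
The plan is to run the template of the proof of Theorem~\ref{thm:CCE2A}, with the crucial change that now $2\gamma-1\le 0$, so the first-price integral $\int_0^1(1-F_{\HB(\Bids)}(x))\,dx=\E[\HB(\Bids)]$ enters the welfare estimate with a \emph{negative} coefficient and must therefore be bounded from \emph{above} rather than from below. As before, assume $v_1=1$ and $v_2=v\le 1$, let $\Bids$ be a coarse correlated equilibrium, and set $\alpha=\E[u_1(\Bids)]$, $\beta=\E[u_2(\Bids)]$, so that $\SW(\opt(\vals))=1$. Starting from $\E[\SW(\Bids)]\ge \alpha+\beta+\E[p^\gamma(\Bids)]$, using $p^\gamma(\bids)=\gamma\,\HB(\bids)+(1-\gamma)\,\SB(\bids)$ from \eqref{eq:hybrid-payment}, and the two-player identity $F_{\SB(\Bids)}=F_{\HB_{-1}(\Bids)}+F_{\HB_{-2}(\Bids)}-F_{\HB(\Bids)}$ from \eqref{FSBUB}, one again gets
\[
\E[\SW(\Bids)]\ \ge\ \alpha+\beta+(2\gamma-1)\,\E[\HB(\Bids)]+(1-\gamma)\Bigl(\E[\HB_{-1}(\Bids)]+\E[\HB_{-2}(\Bids)]\Bigr).
\]
The two terms $\E[\HB_{-i}(\Bids)]$ are lower bounded exactly as in Theorem~\ref{thm:CCE2A} via the \eqref{eq:HB-i} bounds, yielding $\E[\HB_{-1}(\Bids)]\ge 1-\alpha+\alpha\ln\alpha$ and $\E[\HB_{-2}(\Bids)]\ge v-\beta+\beta\ln(\beta/v)$, and we still have $\alpha\ge 1-v$ from player~$1$'s deviation to $v+\epsilon$.

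For the new term I would bound $\E[\HB(\Bids)]$ from above using no-overbidding and individual rationality: whenever a bidder wins they bid at most their valuation, and since the first-price component of the $\gamma$-hybrid payment is at least $\gamma$ times the winner's bid, the winning bid of player~$1$ is at most $(1-u_1)/\gamma$ and that of player~$2$ at most $(v-u_2)/\gamma$. Taking expectations, using that $u_i=0$ unless $i$ wins and that player~$1$ wins with probability at most $1$, this gives $\E[\HB(\Bids)]\le (1-\alpha-\beta)/\gamma$; together with the trivial bound $\E[\HB(\Bids)]\le 1$ we obtain $\E[\HB(\Bids)]\le\min\{1,(1-\alpha-\beta)/\gamma\}$, and because $2\gamma-1\le 0$ this is precisely the estimate we may substitute above. (Even the plain bound $\E[\HB(\Bids)]\le 1$ already suffices to get \emph{a} bound; the sharper version is what makes it competitive toward $\gamma=\tfrac12$.)

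Substituting, and then eliminating $v$ by the same monotonicity argument as in Theorem~\ref{thm:CCE2A} (the derivative of the right-hand side in $v$ is nonnegative, so one sets $v=1-\alpha$), reduces the problem to minimizing an explicit function of $\alpha$ and $\beta$, with a case split according to which branch of the $\min$ is active. I would then carry out this minimization essentially as in Theorem~\ref{thm:CCE2A}: set the $\beta$-derivative to zero to solve for the optimal $\beta=\beta_\alpha$ (in closed form where possible, otherwise by a monotone one-dimensional search), and then optimize over $\alpha$ via a unimodality/ternary-search argument; one over the resulting minimum value is the claimed (orange) upper bound.

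The main obstacle — exactly as in Theorem~\ref{thm:CCE2A} — is this last optimization: justifying the unimodality that legitimises the numerical search, keeping track of the feasibility constraints ($\alpha\le 1$, $0\le\beta\le v$, the case assumption $\beta\le v\alpha$, and which branch of the $\min$ governs $\E[\HB(\Bids)]$), and determining for which $\gamma$ the optimizer remains feasible and the resulting bound is actually informative. This is what pins the admissible range to $\gamma>0.217\ldots$. On the two relevant endpoints the pieces should then match up: at $\gamma=0.339\ldots$ the bound coincides with the $1/(1-\gamma)$ bound of Theorem~\ref{thm:CCEsingleItemNpl} ($\approx 1.515$), and at $\gamma=\tfrac12$ it coincides with the bound of Theorem~\ref{thm:CCE2A} ($\approx 1.295$), so the three upper bounds glue together continuously into the curve of Figure~\ref{fig:bound-overview}(d).
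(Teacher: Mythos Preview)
Your template is right, but the decisive step --- the upper bound on $\E[\HB(\Bids)]$ --- differs from the paper's, and your choice is strictly weaker, so you would not recover the stated (orange) curve.

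The paper bounds $\E[\HB(\Bids)]\le v$ directly: by no-overbidding player~2 never bids above $v$, hence for $\gamma>0$ any bid of player~1 strictly above $v$ is strictly dominated by bidding $v$ (same winning probability, strictly lower first-price component), so one may take $\HB\le v$ pointwise. With this bound everything becomes closed form: plugging $\E[\HB]\le v$ into the welfare estimate and minimizing gives $\beta = v\,e^{-1/(1-\gamma)}$, then the $v$-derivative equals $\gamma-(1-\gamma)e^{-1/(1-\gamma)}$, which is positive precisely for $\gamma>0.21781\ldots$; setting $v=1-\alpha$ and optimizing $\alpha=e^{-1-e^{-1/(1-\gamma)}}$ yields the explicit bound
\[
\E[\SW(\Bids)]\ \ge\ 1-(1-\gamma)\bigl(e^{-1/(1-\gamma)}+e^{-1-e^{-1/(1-\gamma)}}\bigr),
\]
whose reciprocal is exactly the orange graph. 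So the threshold $0.217\ldots$ is not a feasibility artifact of a numerical search; it is the sign change of the $v$-derivative after the $\beta$-optimization.

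Your bounds $\E[\HB]\le 1$ and $\E[\HB]\le (1-\alpha-\beta)/\gamma$ are both valid but looser than $\E[\HB]\le v$ at the relevant minimizers: with $v=1-\alpha$ and the paper's optimal $\beta=(1-\alpha)e^{-1/(1-\gamma)}$ one has $(1-\alpha-\beta)/\gamma\ge 1-\alpha=v$ for all $\gamma\le\tfrac12$ (since $e^{-1/(1-\gamma)}<1-\gamma$ there), and of course $1\ge v$. Consequently your welfare lower bound is smaller and your POA curve sits strictly above the orange one on $(0.217\ldots,\tfrac12)$, coinciding only at $\gamma=\tfrac12$ where the coefficient $2\gamma-1$ vanishes. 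Moreover, in your route the $v$-derivative $(1-\gamma)(1-\beta/v)$ is nonnegative for all $\gamma$, so nothing in your argument singles out $0.217\ldots$; that number is specific to the paper's $\E[\HB]\le v$ bound. To actually prove the theorem as stated, use the dominance argument to get $\E[\HB]\le v$, after which the optimization is entirely closed form and no case split or ternary search is needed.
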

\begin{proof}
For $\gamma \in [0, 1/2]$, note that in the final equality of (\ref{Payment}), we have $(2\gamma - 1) \le 0 $. To \textit{lower} bound the social welfare, we should therefore \textit{upper} bound the expected highest bid. For this, note that due to the fact that players cannot overbid, player 2 never bids higher than $v$. Therefore, for any $\gamma > 0$, any bid of player 1 that is (strictly) above $v$ is (strictly) dominated by bidding $v$ instead\footnote{Formally, player 1 should bid $v+\epsilon$ for any $\epsilon>0$. Since $\epsilon$ can be an arbitrarily small number, we ignore it in the remainder of the proof for notational  convenience.}. Using this, it is clear that $ \E[\HB(\Bids)] \le v$. Again using (\ref{eq:HB-i}) to lower bound the two rightmost integrals, we get 
\begin{align*}
    &\gamma \E[\HB(\Bids)] + (1-\gamma)\E[\SB(\Bids)] \\
    &\geq (2\gamma - 1) \int_0^1 1-  F_{\HB(\Bids)}(x) dx \\
    &\quad+ (1-\gamma)\left(\int_0^1 1 - F_{\HB_{-1}(\Bids)}(x)dx + \int_0^1 1 - F_{\HB_{-2}(\Bids)}(x)dx\right) \\
    &\geq (2\gamma - 1) v + (1-\gamma) (1 - \alpha + \alpha\ln(\alpha) + v - \beta + \beta\ln(\beta/v)),
\end{align*}
so that 
\begin{align} \label{smallergammaUB}
    \E[\SW(\Bids)] &\geq \alpha + \beta + (2\gamma - 1) v \\
    &\quad + (1-\gamma) (1 - \alpha + \alpha\ln(\alpha) + v - \beta + \beta\ln(\beta/v)) \nonumber \\
    &= \gamma (\alpha + \beta + v) + (1-\gamma) (1 + \alpha\ln(\alpha) + \beta\ln(\beta) - \beta\ln(v)). \nonumber
\end{align}
The derivative of this bound with respect to $\beta$ equals
\[
    \gamma + (1-\gamma) (1+ \ln(\beta) - \ln(v)) = 1+ (1-\gamma) \ln(\beta /v).
\]
Note that this derivative is equal to zero for $\beta = v e^{-1/(1-\gamma)}$, and that it is positive for greater $\beta$ and negative for smaller $\beta$. Therefore, the bound attains its minimum at $\beta = v e^{-1/(1-\gamma)}$. Substituting this $\beta$ in (\ref{smallergammaUB}) yields 
\begin{align} \label{eq:betaproofsmallgammaUB}
    \E[\SW(\Bids)] &\geq \gamma (\alpha + (1+e^{-\frac{1}{1-\gamma}})v) \\
    \quad &+ (1-\gamma) (1 + \alpha\ln(\alpha) + v e^{-\frac{1}{1-\gamma}} \ln( e^{-\frac{1}{1-\gamma}} ) \nonumber \\
    &= \gamma (\alpha + (1+ e^{-\frac{1}{1-\gamma}}) v) + (1-\gamma)(1 + \alpha\ln(\alpha)) - v e^{-\frac{1}{1-\gamma}}.\nonumber
\end{align}
Next, we take the derivative of (\ref{eq:betaproofsmallgammaUB}) with respect to $v$, which gives 
\[
    \gamma (1+e^{-\frac{1}{1-\gamma}}) - e^{-\frac{1}{1-\gamma}} = \gamma - (1-\gamma) e^{-\frac{1}{1-\gamma}}.
\]
This derivative is positive for all $\gamma > 0.21781\dots$, so for all $\gamma \in (0.21781\dots, 1/2]$, we minimize the upper bound by setting $v$ to its lowest admissible value, being $v = 1 -\alpha$ (due to the CCE condition that $\alpha \ge 1 -v$). \\ \\
Substituting the optimal parameter settings $\beta = v e^{-1/(1-\gamma)} = (1-\alpha) e^{-1/(1-\gamma)}$ gives the following social welfare bound 
\begin{align} \label{eq:alphaproofsmallgammaUB}
    \E[\SW(\Bids)] &\geq \gamma (\alpha + (1-\alpha) (1+ e^{-\frac{1}{1-\gamma}})) + (1-\gamma)(1 + \alpha\ln(\alpha)) - \nonumber \\
    &\quad (1-\alpha) e^{-\frac{1}{1-\gamma}} \nonumber \\
    &= 1 - (1-\gamma) (1-\alpha) e^{-\frac{1}{1-\gamma}} + (1-\gamma)\alpha\ln(\alpha),
\end{align}
as a function of $\alpha$ only, which we optimize by setting its derivative with respect to $\alpha$ equal to zero. This yields
\begin{align*}
(1-\gamma) e^{-\frac{1}{1-\gamma}} + (1-\gamma)(1 + \ln(\alpha)) = 0 &\iff  \ln(\alpha) = -1 -e^{-\frac{1}{1-\gamma}} \\
&\iff \alpha = e^{-1 -e^{-1/(1-\gamma)} }.    
\end{align*}
To facilitate the simplification of the formula of the final bound, we first substitute only $\ln(\alpha)$ in (\ref{eq:alphaproofsmallgammaUB}), after which $\alpha$ itself is substituted in the final step. We get 
\begin{align} \label{eq:finalpfsmallgamma}
    \E[\SW(\Bids)] &\geq 1 - (1-\gamma)e^{-\frac{1}{1-\gamma}} + (1-\gamma) \alpha  e^{-\frac{1}{1-\gamma}} + \nonumber \\
    &\quad (1-\gamma)\alpha (-1 -e^{-\frac{1}{1-\gamma}}) \nonumber \\
    &= 1 - (1-\gamma)e^{-\frac{1}{1-\gamma}} - (1-\gamma) \alpha \nonumber \\
    &= 1- (1-\gamma) (e^{-\frac{1}{1-\gamma}} +  e^{-1 -e^{-1/(1-\gamma)} }).
\end{align}
We divide 1 by (\ref{eq:finalpfsmallgamma}) to get the upper bound on the price of anarchy presented as the orange graph in Figure \ref{fig:bound-overview}(d). \qed
\end{proof}

\begin{toappendix}
\myparagraph{Extensions to $n$ players:} 
We can modify the procedure in Theorems \ref{thm:CCE2A} and \ref{thm:CCE2B} to work for $n$ players, producing a different graph for every $n$. Observe that a more general version of (\ref{FSBUB}) holds:
\begin{align*}
F_{SB(\Bids)}(x) &= \sum_{i=1}^n F_{HB_{-i}(\Bids)}(x) - (n-1) F_{HB(\Bids)}(x) \\
&\leq \sum_{i=1}^2 F_{HB_{-i}(\Bids)}(x) - (n-1) F_{HB(\Bids)}(x) \\
\end{align*}
We can use this to bound the second highest bid. Doing a similar analysis as in Theorem \ref{thm:CCE2A} gives us a procedure that works for $\gamma \in \left[\frac{n-1}{n}, 1\right]$ and similarly Theorem \ref{thm:CCE2B} yields a procedure for $\gamma\in \left(0.217..., \frac{n-1}{n}\right]$. We obtain bounds which are always tight at $\gamma=1$ for any $n$, but the higher $n$ the faster the bound goes up when $\gamma$ gets away from $1$, and already for $n=4$ the multi-unit bound lies below what the extension of Theorem \ref{thm:CCE2B} can give us. 
\end{toappendix}

\section{Conclusion and Future Work}

Our bound on the \mcce-\poa\ of \yFPA\ is tight over the entire range of $\gamma \in [0, 1]$ if players can overbid, both in the single-item and multi-unit auction setting. Despite the fact that our bounds on the \mcce-\poa\ are rather low already if players cannot overbid, further improvements might still be possible. We consider this a challenging open problem for future work.

On a more conceptual level, in this paper we considered a basic bid rigging model where the auctioneer colludes with the winning bidders only. It will be very interesting to study the price of anarchy of more complex bid rigging models; for example, the model introduced in \cite{LW2010} (ideally generalized to the multi-unit auction setting) might be a natural next step.

\bibliographystyle{splncs04} 
\bibliography{references}

\begin{thebibliography}{10}
\providecommand{\url}[1]{\texttt{#1}}
\providecommand{\urlprefix}{URL }
\providecommand{\doi}[1]{https://doi.org/#1}

\bibitem{BMTT19}
Birmpas, G., Markakis, E., Telelis, O., Tsikiridis, A.: Tight welfare
  guarantees for pure {N}ash equilibria of the uniform price auction. Theory of
  Computing Systems  \textbf{63}(7),  1451--1469 (2019).
  \doi{10.1007/s00224-018-9889-7}

\bibitem{CKST16}
Christodoulou, G., Kov\'{a}cs, A., Sgouritsa, A., Tang, B.: Tight bounds for
  the price of anarchy of simultaneous first-price auctions. ACM Transactions
  on Economics and Computation  \textbf{4}(2) (2016). \doi{10.1145/2847520}

\bibitem{deKeijzer2013}
De~Keijzer, B., Markakis, E., Sch{\"a}fer, G., Telelis, O.: Inefficiency of
  standard multi-unit auctions. In: Proceedings of the European Symposium on
  Algorithms (ESA). pp. 385--396. Springer (2013)

\bibitem{feldman2013simultaneous}
Feldman, M., Fu, H., Gravin, N., Lucier, B.: Simultaneous auctions are (almost)
  efficient. In: Proceedings of the forty-fifth annual ACM symposium on Theory
  of computing. pp. 201--210 (2013)

\bibitem{feldman2016correlated}
Feldman, M., Lucier, B., Nisan, N.: Correlated and coarse equilibria of
  single-item auctions. In: Proceedings of the International Conference on Web
  and Internet Economics (WINE). pp. 131--144. Springer, Montreal, Canada
  (2016)

\bibitem{GM87}
Graham, D.A., Marshall, R.C.: Collusive bidder behavior at single-object
  second-price and english auctions. Journal of Political Economy
  \textbf{95}(6),  1217--1239 (1987)

\bibitem{Harsanyi67}
Harsanyi, J.C.: Games with incomplete information played by ``{B}ayesian''
  players, {Parts I, II and III}. Management Science  \textbf{14}(3),
  159--182, 320--334, and 486--502 (1967)

\bibitem{Ingraham2005}
Ingraham, A.: A test for collusion between a bidder and an auctioneer in
  sealed-bid auctions. Contributions in Economic Analysis \& Policy  \textbf{4}
  (2005). \doi{10.2202/1538-0645.1448}

\bibitem{KP99}
Koutsoupias, E., Papadimitriou, C.: Worst-case equilibria. In: Proceedings of
  the Symposium on Theoretical Aspects of Computer Science (STACS). pp.
  404--413. Springer (1999)

\bibitem{LW2010}
Lengwiler, Y., Wolfstetter, E.: Auctions and corruption: An analysis of bid
  rigging by a corrupt auctioneer. Journal of Economic Dynamics and Control
  \textbf{34}(10),  1872--1892 (2010). \doi{10.1016/j.jedc.2010.03.002}

\bibitem{LW00}
Lengwiler, Y., Wolfstetter, E.G.: {Auctions and Corruption}. CESifo Working
  Paper Series~401, CESifo (2000),
  \url{https://ideas.repec.org/p/ces/ceswps/\_401.html}

\bibitem{MT15}
Markakis, E., Telelis, O.: Uniform price auctions: Equilibria and efficiency.
  Theory of Computing Systems  \textbf{57}(3),  549–575 (2015).
  \doi{10.1007/s00224-014-9537-9}

\bibitem{AM92}
McAfee, R., McMillan, J.: Bidding rings. American Economic Review
  \textbf{82}(3),  579--99 (1992)

\bibitem{MM06}
Menezes, F.M., Monteiro, P.K.: Corruption and auctions. Journal of Mathematical
  Economics  \textbf{42}(1),  97--108 (2006).
  \doi{10.1016/j.jmateco.2005.04.002}

\bibitem{NYT93}
Olmstead, L.: 2 managers held in bidding scheme at school agency. The New York
  Times  (April 21, 1993),
  \url{{https://www.nytimes.com/1993/04/21/nyregion/2-managers-held-in-bidding-scheme-at-school-agency.html}},
  accessed May 11, 2021

\bibitem{R15}
Roughgarden, T.: Intrinsic robustness of the price of anarchy. Journal of the
  ACM  \textbf{62}(5) (2015). \doi{10.1145/2806883}

\bibitem{RST17}
Roughgarden, T., Syrgkanis, V., Tardos, E.: The price of anarchy in auctions.
  Journal on Artificial Intelligence Research  \textbf{59}(1),  59–101 (2017)

\bibitem{phdthesisSyrgkanis}
Syrgkanis, V.: Efficiency of mechanisms in complex markets. Ph.D. thesis,
  Cornell University (2014)

\bibitem{syrgkanis2013composable}
Syrgkanis, V., Tardos, E.: Composable and efficient mechanisms. In: Proceedings
  of theACM Symposium on Theory of Computing (STOC). pp. 211--220 (2013)

\bibitem{WSJ99}
{The Wall Street Journal}: State governments consider reopening bidding for
  {B}erlin airport. The Wall Street Journal  (August 19, 1999),
  \url{{https://www.wsj.com/articles/SB935028690392278172}}, accessed May 11,
  2021

\end{thebibliography}

\renewcommand{\appendixbibliographystyle}{splncs04}

\end{document}